\DeclareFontFamily{U}{mathx}{\hyphenchar\font45}
\DeclareFontShape{U}{mathx}{m}{n}{
      <5> <6> <7> <8> <9> <10>
      <10.95> <12> <14.4> <17.28> <20.74> <24.88>
      mathx10
      }{}
\DeclareSymbolFont{mathx}{U}{mathx}{m}{n}
\DeclareMathAccent{\widehat}{0}{mathx}{"70}
\DeclareMathAccent{\widecheck}{0}{mathx}{"71}
\begin{document}

\theoremstyle{acmdefinition}\newtheorem{remark}[theorem]{Remark}

\title{A Subatomic Proof System for Decision Trees}
\author{Chris Barrett}
\orcid{0000-0003-1708-3554}
\affiliation{\institution{University of Bath}
                    \city{Bath}
                 \country{UK}}
\author{Alessio Guglielmi}
\orcid{0000-0002-7234-2347}
\affiliation{\institution{University of Bath}
                    \city{Bath}
                 \country{UK}}

\begin{CCSXML}
<ccs2012>
<concept>
<concept_id>10003752.10003790.10003792</concept_id>
<concept_desc>Theory of computation~Proof theory</concept_desc>
<concept_significance>500</concept_significance>
</concept>
<concept>
<concept_id>10003752.10003790.10003801</concept_id>
<concept_desc>Theory of computation~Linear logic</concept_desc>
<concept_significance>300</concept_significance>
</concept>
<concept>
<concept_id>10003752.10003777.10003782</concept_id>
<concept_desc>Theory of computation~Oracles and decision trees</concept_desc>
<concept_significance>500</concept_significance>
</concept>
<concept>
<concept_id>10003752.10003777.10003785</concept_id>
<concept_desc>Theory of computation~Proof complexity</concept_desc>
<concept_significance>300</concept_significance>
</concept>
<concept>
<concept_id>10003752.10003777.10003787</concept_id>
<concept_desc>Theory of computation~Complexity theory and logic</concept_desc>
<concept_significance>300</concept_significance>
</concept>
</ccs2012>
\end{CCSXML}

\ccsdesc[500]{Theory of computation~Proof theory}
\ccsdesc[300]{Theory of computation~Linear logic}
\ccsdesc[500]{Theory of computation~Oracles and decision trees}
\ccsdesc[300]{Theory of computation~Proof complexity}
\ccsdesc[300]{Theory of computation~Complexity theory and logic}

\setcopyright{acmlicensed}
\acmJournal{TOCL}
\acmYear{2022} \acmVolume{1} \acmNumber{1} \acmArticle{1} \acmMonth{1} \acmPrice{15.00}\acmDOI{10.1145/3545116}

\keywords{deep inference, open deduction, subatomic logic, cut elimination, Statman tautologies.}

\begin{abstract}
We design a proof system for propositional classical logic that integrates two languages for Boolean functions: standard conjunction-disjunction-negation and binary decision trees. We give two reasons to do so. The first is proof-theoretical naturalness: the system consists of all and only the inference rules generated by the single, simple, linear scheme of the recently introduced subatomic logic. Thanks to this regularity, cuts are eliminated via a natural construction. The second reason is that the system generates efficient proofs. Indeed, we show that a certain class of tautologies due to Statman, which cannot have better than exponential cut-free proofs in the sequent calculus, have polynomial cut-free proofs in our system. We achieve this by using the same construction that we use for cut elimination. In summary, by expanding the language of propositional logic, we make its proof theory more regular and generate more proofs, some of which are very efficient.

That design is made possible by considering atoms as superpositions of their truth values, which are connected by self-dual, non-commutative connectives. A proof can then be projected via each atom into two proofs, one for each truth value, without a need for cuts. Those projections are semantically natural and are at the heart of the constructions in this paper. To accommodate self-dual non-commutativity, we compose proofs in deep inference.
\end{abstract}

\maketitle

\section{Introduction}

In this work, we contribute to the proof theory of classical logic in two ways: \begin{itemize}
\item we design a new proof system with a more natural and simpler normalisation theory;
\item we show that, within this proof system, significantly smaller cut-free proofs become available.
\end{itemize}
We do so by augmenting the standard, Boolean algebraic language of propositional classical logic formulae with decision trees \cite{Wege:00:Branchin:zr}. In other words, we build a new proof system around a language based on standard connectives (conjunction, disjunction, negation) and augmented by expressions in which atoms are connectives denoting choice: the formula $A\mathbin bC$ has the truth value of formula $A$ if atom $b$ is negative and that of $C$ if $b$ is positive. For example, if $c$ is negative, $A\land(B\mathbin c((D\lor E)\mathbin fG))$ is interpreted as $A\land B$. If $c$ is positive and $f$ negative, the formula is interpreted as $A\land(D\lor E)$. We limit ourselves, in this paper, to propositional proof systems, but those can be extended with quantifiers and modalities in standard ways. At first sight, the integration of the two languages of Boolean operators and decision trees is not compelling because each language is complete for Boolean functions. However, an important reason for that integration comes from normalisation. Developing the basic theory requires some work, which we do in this paper. The highlight is a very simple and natural cut-elimination procedure that is, to the best of our knowledge, based on new principles that could have very wide applicability in other parts of proof theory.

The technical starting point is the observation that extending the standard language by decision trees is natural from the point of view of \emph{subatomic logic} \cite{AlerGugl:17:Subatomi:dn} in deep inference \cite{Gugl:06:A-System:kl,BrunTiu:01:A-Local-:mz,GuglStra:01:Non-comm:rp}. In subatomic logic, several inference rules that seem unrelated in the standard proof systems, such as identity, cut, contraction and weakening, are obtained as instances of a more general scheme that generates them all. Not only the design of the rules is unified this way, but also the study of normalisation can be done uniformly. In subatomic logic, all inference rules are linear in the usual, structural sense of dealing with the same variables in the premiss and the conclusion, with no repetitions. Non-linearity, which can manifest itself in inferences as duplication or negation of atoms, is revealed through an interpretation map that does not affect normalisation and that maps the subatomic proofs to the standard ones. Surprisingly, the subatomic logic scheme that generates all the rules of propositional logic also generates sound rules for decision trees.

\newcommand{\ideq}{\mathrel\equiv}
Despite the novelty, and perhaps the exoticism, of the subatomic perspective, the constructions in this paper are very natural and can be easily understood, at least at an intuitive level, without any previous exposure to background material. Indeed, they could be considered a good introduction to the whole area of deep inference. Our aim for the rest of the introduction is to show the main ideas with a simple example. Let us consider the series of tautologies defined by Statman in \cite{Stat:78:Bounds-f:fj} (of which a more accessible, equivalent definition can be found in \cite{ClotKran:02:Boolean-:xr}, Section~5.3.2). The series is defined in Section~\ref{SectStatman}, but, for this introduction, the first three tautologies are sufficient:
\[
\renewcommand\arraystretch{1.5}
\begin{array}{@{}r@{}c@{}l@{}}
S_1&{}\ideq{}&\phantom{((}(\bar a_1\land\bar b_1)\lor(a_1\lor b_1)\;,       \\
S_2&{}\ideq{}&\lower1pc
              \hbox{\!\!\!$\renewcommand\arraystretch{1}
                           \begin{array}{l@{}l@{}l@{}l@{}}
                                   & &(\bar a_2\land\bar b_2)\lor{}       \\
                                  (&(&(a_2\lor b_2)\land\bar a_1)\land{}& \\
                                   &(&(a_2\lor b_2)\land\bar b_1)       &)
                                                         \lor(a_1\lor b_1)
                                                                       \;,\\
                           \end{array}$}                                    \\
S_3&{}\ideq{}&\lower2pc
              \hbox{\!\!\!$\renewcommand\arraystretch{1}
                           \begin{array}{l@{}l@{}l@{}l@{}}
                                   & &(\bar a_3\land\bar b_3)\lor{}       \\
                                  (&(&(a_3\lor b_3)\land\bar a_2)\land{}& \\
                                   &(&(a_3\lor b_3)\land\bar b_2)       &)
                                                                    \lor{}\\
                                  (&(&(a_3\lor b_3)\land
                                      (a_2\lor b_2)\land\bar a_1)\land{}& \\
                                   &(&(a_3\lor b_3)\land
                                      (a_2\lor b_2)\land\bar b_1)       &)
                                                         \lor(a_1\lor b_1)
                                                                       \;.\\
                          \end{array}$}                                     \\
\end{array}
\]
$S_1$ is obviously a tautology. To see that $S_2$ is a tautology, one can do a case analysis on the truth values of $a_2$ and $b_2$: if both atoms are false the tautology is trivial and, if any of them is true, $S_2$ reduces to $S_1$ because $a_2\lor b_2$ is true. The same argument works for all the other tautologies in the sequence $S_1$, \dots, $S_n$, \dots. That constitutes a polynomial-time proof method, and it is immediate to translate it into a formal proof of some proof system containing a cut rule, which can be used for the case analysis. Indeed, Statman proved that there are polynomial-size tree-like sequent proofs with cut for the tautologies, but only exponential-size tree-like ones if cut is not allowed. It would seem that if we remove the cut from a proof system, we remove the means for generating small proofs by case analysis. However, it turns out that augmenting the language with decision trees circumvents this problem: we can indeed obtain polynomial-size cut-free proofs of Statman tautologies based on case analysis.

\newcommand{\drB}{\drv[border=black]}
\newcommand{\lpr}[1]{\mathop{\strut\mathsf l_{#1}}}
\newcommand{\rpr}[1]{\mathop{\strut\mathsf r_{#1}}}
\catcode`@=11 
\def\finupsm@sh{\ht\z@\z@ \box\z@}%
\def\mathupsm@sh#1#2{\setbox\z@\hbox{$\m@th#1{#2}$}\finupsm@sh}%
\def\upsmash{\relax 
  \ifmmode\def\next{\mathpalette\mathupsm@sh}\else\let\next\makeupsm@sh
  \fi\next}%
\catcode`@=12 
We can  formalise the previous argument with decision trees in deep inference as follows. If we replace each occurrence of $a_2$ in $S_2$ with the unit $1$ (for `true'), we obtain a formula $\rpr{a_2}S_2$, which we call the \emph{right projection} of $S_2$. Similarly, in the \emph{left projection} $\lpr{a_2}S_2$, we substitute each occurrence of $a_2$ by $0$ (`false') and we obtain a formula $\lpr{a_2}S_2$. From $(\lpr{a_2}S_2)\mathbin{a_2}(\rpr{a_2}S_2)$ we can derive $S_2$ with a simple cut-free derivation (Lemma~\ref{LemmaReorder}). Therefore, we can represent the above argument about deriving $S_2$ from $S_1$ as
\[
\drv{\drB{\drB{\drB{\drB{1;
                         |;
                         \lpr{b_2}\lpr{a_2}S_2}
                    \mathbin{b_2}
                    \drB{S_1;
                         |;
                         \rpr{b_2}\lpr{a_2}S_2}};
               |;
               \lpr{a_2}S_2}
          \mathbin{a_2}
          \drB{S_1;
               |;
               \rpr{a_2}S_2}};
     |;
     S_2}
\;,
\]
where $\upsmash{\drv{A;|;B}}$ denotes a derivation from $A$ to $B$. The construction is shown in detail in Figures~\ref{FigStatmanBase} and \ref{FigStatmanInduct}.

In deep inference, particularly in subatomic logic, projections can be lifted from formulae to proofs with minimal effort. This way, we can transform a proof of $A$, which might contain cuts, into a proof of $(\lpr bA)\mathbin b(\rpr bA)$, where $b$ is an atom that we are free to choose. From $(\lpr bA)\mathbin b(\rpr bA)$, in turn, we derive $A$. The proof of $A$ so obtained has the remarkable property of not containing any cut on atom $b$. By repeating the construction on all the atoms appearing in cuts, we obtain a cut-free proof of $A$. It is interesting to note that, if we interpret the cuts in the original proof of $A$ as case analyses, we have turned them into decision points in a decision tree. Semantically, they might constitute the same case analysis, but they have an entirely different computational and complexity-theoretic behaviour. We think that this idea is general, novel and natural, and indeed the main interest of this paper because it might have far-reaching consequences in proof theory and proof complexity.

As expected, augmenting a language with new connectives makes new proofs available, and, among them, some are smaller than those in the original language. In this sense, the results in this paper are not a surprise. However, one would not expect the proof system for the augmented language to be less complex than the one for the original language. In fact, new connectives – in our case, atoms – need new inference rules, and this normally means that the amount of information necessary to describe the proof system increases. Surprisingly, in our case, the opposite happens: we can describe the proof system of the augmented language with less information than is necessary for the proof system of the language without atoms-as-connectives. The reason is that \emph{all} the instances of a general, but simple, proof scheme are sound for the extended language, and their collection is complete. This allows us to develop the proof theory of the new proof system working directly with one and only one subatomic scheme, and that is the main reason why we consider our proof system natural.

The findings in this paper corroborate those in \cite{AlerGugl:17:Subatomi:dn}, the foundational paper of subatomic logic, but the results in this paper have been obtained fortuitously, without any planning or expectation. What are the reasons for the unexpected success of the subatomic scheme? Unfortunately, we cannot offer any convincing explanation at this time. We hope that further research will allow us to understand these phenomena and formulate reasonable research questions.

\section{Decision Trees and the Subatomic Language}

A decision tree is a data structure that forms the basis of many modern, efficient implementations of Boolean functions. For example, the decision tree represented by $A\mathbin aB$ is read as `if $a$ then $B$, else $A$', where $a$ is an atom, \emph{i.e.}, a Boolean variable. Just as in subatomic logic, we consider atoms as binary self-dual, non-commutative connectives.

Suppose we wish to prove that two decision trees represent the same function; we could do so by building a proof system around decision trees. In what follows, we take an approach that enriches the language of decision trees with propositional connectives $\land$ and $\lor$ in order to express implication between formulae and gain access to the proof compression mechanisms of contraction and cut.

\begin{definition}
A \emph{decision tree} is a binary tree, where each interior node is labelled by an atom and each leaf is labelled by a unit ($0$ or $1$).
\end{definition}

We use the word `atom' both at the standard and the subatomic level. Typographically, we distinguish atoms in subatomic logic by using boldface.

\newcommand{\ba}{\mathbin{\mathbf a}}
\newcommand{\bb}{\mathbin{\mathbf b}}
\newcommand{\bc}{\mathbin{\mathbf c}}
\newcommand{\DT}{{\mathsf{DT}}}
\newcommand{\atoms}{{\mathcal A}}
\begin{definition}
Let $\atoms$ be a countable set of \emph{atoms} whose elements are denoted by $\ba$, $\bb$, $\bc$, \dots. We define the \emph{formulae} of the (subatomic) language $\DT$ as the language generated by the grammar
\[
\DT\mathrel{::=}0\mid1\mid(\DT\lor\DT)\mid(\DT\land\DT)\mid(\DT\mathbin\atoms\DT)
\;,
\]
where the distinct symbols $0$, $1$, $\lor$ and $\land$ do not appear in $\atoms$. We call $0$ and $1$ \emph{units}. Atoms, $\lor$ and $\land$ are \emph{connectives}. We denote formulae in $\DT$ as $A$, $B$, $C$, $D$, \dots. The expression $A\ideq B$ denotes syntactic equality of formulae. A \emph{context} is a formula where some subformulae are substituted by holes, which are denoted $\{\;\}$. Given a context with a single hole, denoted $K\{\;\}$, and a formula $A$, let $K\{A\}$ denote the formula given by substituting $A$ into the hole. In a formula $K\{A\ba B\}$, for any $\ba\in\atoms$, each atom occurrence in $A$ or $B$ is said to be \emph{nested}. We may drop parentheses when there is no ambiguity. The \emph{size} of a formula is the number of occurrences of units appearing in it.
\end{definition}

\begin{definition}
We equip the language $\DT$ with an involution $\bar\cdot\,\colon\DT\to\DT$, which we call \emph{negation}, defined inductively as follows, for all atoms $\ba$:
\[
\bar0=1
\;,\quad
\bar1=0
\;,\quad
\overline{A\lor B}=\bar A\land\bar B
\;,\quad
\overline{A\land B}=\bar A\lor\bar B
\;,\quad
\overline{A\ba B}=\bar A\ba\bar B
\;.
\]
We also use the same notation to denote the \emph{dual} of connectives, \emph{i.e.}, $\bar \lor=\land$, $\bar \land=\lor$ and $\bar \ba=\ba$.
\end{definition}

We take as equalities on formulae the linear unit equations of classical logic, plus two new subatomic unit equations.

\begin{definition}\label{DefEquiv}
The equivalence $=$ on formulae is the minimal equivalence relation, closed under context (\emph{i.e.}, $A=B$ implies $K\{A\}=K\{B\}$, for any context $K\{\;\}$), defined by the following equations:
\[
\renewcommand\arraystretch{1.5}
\begin{array}{@{}l@{\quad}l@{\quad}l@{}}
A\lor 0=A\;,&0\land0=0\;,&0\ba0=0\;,\\
A\land1=A\;,&1\lor 1=1\;,&1\ba1=1\;,\\
\end{array}
\]
for every atom $\ba$.
\end{definition}

For example, $1\lor(0\ba1)$ is equal under $=$ to $1\lor(((0\ba1)\land1)\lor(0\bb0))$, but neither formula is equal to $1$.

The language $\DT$ expresses both the language of decision trees and the language of classical propositional formulae. The latter is obtained via an interpretation function that interprets as standard formulae all the subatomic formulae that do not contain irreducible nested atoms. We characterise these two sublanguages as follows.

\newcommand{\Prop}{{\mathsf{Prop}}}
\newcommand{\SDT}{{\mathsf{SDT}}}
\begin{definition}\label{DefPropSDT}
Given $A\in\DT$, we say that $A$ is a \emph{propositional formula} if there is a $B=A$ with no nested atom occurrences; we say that $A$ is a \emph{strict decision tree} if there is a $B=A$ with no conjunctions or disjunctions. We denote with $\Prop$ and $\SDT$, respectively, the subsets of $\DT$ containing all the propositional formulae and the strict decision trees.
\end{definition}

For example, a formula that contains no nested atom occurrences, and thus is in $\Prop$, is $(1\ba0)\lor(1\bb0)$. Any formula equivalent to this is also in $\Prop$, even if it contains nested atom occurrences. For example, $(1\ba(0\bc0))\lor(1\bb0)$. Similarly, the formula $(1\bb0)\ba(1\bc1)$ is in $\SDT$, and thus so is the equivalent formula $(1\bb0)\ba(0\lor1)$, even though it contains a disjunction.

\newcommand{\p}{\mathop{\mathsf p\strut}}
\begin{remark}\label{RemEmbed}
There is a natural map $\p$ from the formulae of classical propositional logic to the language $\Prop$, given by substituting positive and negative atoms $a$ and $\bar a$ with the corresponding decision trees $0\ba1$ and $1\ba0$, respectively. For example, $\p((1\lor a)\land\bar b)=(1\lor(0\ba1))\land(1\bb0)$.
\end{remark}

\newcommand{\sem}[1]{\llbracket#1\rrbracket}
Throughout the paper, we exhibit constructions that allow us to translate between $\Prop$ and $\SDT$ within our proof system; that is, for any formula in $\Prop$, we can find a semantically equivalent formula in $\SDT$ and vice versa, and proofs, within the proof system, supporting that equivalence. $\Prop$ and $\SDT$ are syntax for Boolean functions and can be interpreted as such via a standard evaluation function $\sem{\,\cdot\,}$. In the following, we use the symbols $\land$ and $\lor$ also as Boolean operators on the set of values $\{0,1\}$, \emph{i.e.}, we use them in the semantics without typographically distinguishing them.

\begin{definition}\label{DefSem}
For any assignment of Boolean values to atoms $X\colon\atoms\to\{0,1\}$, we define the \emph{semantic interpretation} of formulae $\sem{\,\cdot\,}_X\colon\DT\to\{0,1\}$ as
\[
\renewcommand\arraystretch{1.5}
\begin{array}{@{}r@{}}
\sem{0}_X=0\;,\\
\sem{1}_X=1\;,\\
\end{array}
\qquad
\begin{array}{@{}r@{}}
\sem{A\lor  B}_X=\sem{A}_X\lor \sem{B}_X\;,\\
\sem{A\land B}_X=\sem{A}_X\land\sem{B}_X\;,\\
\end{array}
\qquad
\sem{A\ba B}_X =
\begin{cases}
\sem{A}_X&\text{if }X(\ba)=0\;,\\
\sem{B}_X&\text{if }X(\ba)=1\;,\\
\end{cases}
\]
for every atom $\ba$. We say that $A$ \emph{implies} $B$ if, for every assignment $X$, $\sem{A}_X\leq\sem{B}_X$. We say that $A$ is a \emph{tautology} if for every assignment $X$, $\sem{A}_X=1$. We say that two formulae $A$ and $B$ are \emph{semantically equivalent} if for every assignment $X$, $\sem{A}_X=\sem{B}_X$.
\end{definition}

For the sublanguages $\SDT$ and $\Prop$, the semantic interpretation coincides with the standard one.

\section{Deep Inference and a Subatomic Proof System for Decision Trees}

Deep inference is, essentially, the ability to apply inference rules at arbitrary depth within a formula \cite{AlerGugl:17:Subatomi:dn,Gugl:06:A-System:kl,GuglGundPari::A-Proof-:fk}. Thus, derivations are composable horizontally by the same connectives as formulae, as well as in the usual vertical manner. Subatomic proof systems \cite{Aler:16:A-Study-:hc,AlerGugl:17:Subatomi:dn} represent atoms as non-commutative self-dual connectives and, to accommodate this choice, we compose proofs by deep inference. Consequently, it is possible to adopt a unique, linear rule scheme that can generate all the standard inference rules for many logics, including classical, $\mathsf{MALL}$ \cite{Gira:87:Linear-L:wm} and $\mathsf{BV}$ \cite{Gugl:06:A-System:kl}.

\let\sup\widehat
\let\sdn\widecheck
\newcommand{\SKS}{{\mathsf{SKS}}}
For example, in system $\SKS$ for classical logic \cite{BrunTiu:01:A-Local-:mz}, non-linear rules such as identity and contraction appear in the subatomic system as the following instances of more general \emph{linear} inferences. Working modulo some unit equations, we can interpret the left sides as the respective right sides, and vice versa:
\[
\renewcommand\arraystretch{3}
\begin{array}{@{}r@{\quad\leftrightarrow\quad}l@{}}
\drv{(0\lor1)\ba(1\lor0);
     [\lor\sdn\ba]-;
     (0\ba1)\lor(1\ba0)}
&\drv{1;-;a\lor\bar a}\;,\\
\drv{(0\ba1)\lor(0\ba1);
     [\lor\sup\ba]-;
     (0\lor0)\ba(1\lor1)}
&\drv{a\lor a;-;a}\;.    \\
\end{array}
\]

Indeed, it turns out that \emph{all} the instances of the general rule scheme are sound and complete for our extended language. That is not the case for the standard language of propositional classical logic, to which two instances of the scheme do not apply because they have nested atoms. Thus, by extending the language, we simplify the proof system. This is so if we consider, as is customary in proof complexity, a proof system as an algorithm that checks proofs. A program applying one rule scheme is shorter than one that has to consider several, but not all, instances of the scheme. That comparison also holds in terms of Kolmogorov complexity.

The rule scheme is as follows.

\newcommand{\salp}{\mathbin\alpha}
\newcommand{\sbet}{\mathbin\beta}
\begin{definition}
The inference rules of \emph{subatomic shape} are those of the form
\[
\drv{(A\sbet B)\salp(C\mathbin{\sup\sbet}D);
     [\alpha\sup\beta]-;
     (A\salp C)\sbet(B\salp D)}
\quad\text{or}\quad
\drv{(A\sbet B)\salp(C\sbet D);
     [\beta\sdn\alpha]-;
     (A\salp C)\sbet(B\mathbin{\sdn\salp}D)}\;,
\]
for connectives $\alpha,\beta\in\{{\lor},{\land}\}\cup\atoms$ and formulae $A$, $B$, $C$ and $D$ of $\DT$. The \emph{names} of inference rules of subatomic shape are given by $\alpha\sup\beta$ and $\beta\sdn\alpha$ respectively. We define $\sdn\land=\sdn\lor=\lor$, $\sup\land=\sup\lor=\land$ and $\sdn\ba=\sup\ba=\ba$. This notation relates two dual connectives, assigning a \emph{weaker} $\sdn\cdot$ and a \emph{stronger} $\sup\cdot$ of the pair.
\end{definition}

\begin{remark}
Note that the name $\ba\sdn\land$ is not equal to $\ba\lor$ (which is not considered a name at all), even though $\sdn\land=\lor$. That is to say, although $\sup\cdot$ and $\sdn\cdot$ are considered as functions on connectives when appearing in a rule instance, they are considered as decorations when appearing in the name of a rule.
\end{remark}

\begin{remark}
Some subatomic rules are assigned two names under this scheme, for example $\lor\sup\land$ and $\land\sdn\lor$, as in
\[
\drv{(A\land B)\lor(C\mathbin{\sup\land}D);
     [\lor\sup\land]-;
     (A\lor C)\land(B\lor D)}
\quad\text{and}\quad
\drv{(A\land B)\lor(C\land D);
     [\land\sdn\lor]-;
     (A\lor C)\land(B\mathbin{\sdn\lor}D)}
\;.
\]
We consider this one rule with two distinct names. Another example is the pair $\ba\sup\bb$ and $\ba\sdn\bb$:
\[
\drv{(A\bb B)\ba(C\bb D);
      [\ba\sup\bb]-;
      (A\ba C)\bb(B\ba D)}
\quad\text{and}\quad
\drv{(A\ba B)\bb(C\ba D);
      [\ba\sdn\bb]-;
      (A\bb C)\ba(B\bb D)}
\]
(where one should keep in mind that $\ba$ and $\bb$ stand for any atom).
\end{remark}

In the deep-inference literature, the cut rule is typically decomposed into several `up' rules and the identity into several `down' rules. It would be tempting to associate the saturation function with the up-down classification of rules. However, we do not have, at this point, any strong reason to do so, therefore we proceed in an agnostic way and we advise the reader not to make any assumption in this regard.

\newcommand{\DTsa}{{\DT^{\mathsf{sa}}}}
We define our subatomic proof system for decision trees, which we call $\DTsa$, as the set of all the possible rules of subatomic shape.

\begin{definition}
A \emph{subatomic proof system} is a set of inference rules of subatomic shape, together with \emph{equality rules}, for an equational theory $=$, given by $\drv{A;[=]-;B}$, for every formulae $A$ and $B$ such that $A=B$. System $\DTsa$ is the subatomic proof system obtained by taking all the rules of subatomic shape $\alpha\sdn{\beta}$ and $\alpha\sup\beta$ for connectives $\alpha,\beta\in\{{\lor},{\land}\}\cup\atoms$ together with the set of equality rules generated by $=$ from Definition~\ref{DefEquiv}.
\end{definition}

\begin{figure}[t]
\[
\renewcommand\arraystretch{2.5}
\begin{array}{@{}c@{\qquad}c@{\qquad}c@{}}
 \drv{(A\land B)\land(C\land D);
      [\land\sup\land]-;
      (A\land C)\land(B\land D)}
&\drv{(A\land B)\lor(C\land D);
      [\lor\sup\land]-;
      (A\lor C)\land(B\lor D)}
&\drv{(A\land B)\ba(C\land D);
      [\ba\sup\land]-;
      (A\ba C)\land(B\ba D)}    \\
 \drv{(A\lor B)\land(C\land D);
      [\land\sup\lor]-;
      (A\land C)\lor(B\land D)}
&\drv{(A\lor B)\lor(C\land D);
      [\lor\sup\lor]-;
      (A\lor C)\lor(B\lor D)}
&\drv{(A\lor B)\ba(C\land D);
      [\ba\sup\lor]-;
      (A\ba C)\lor(B\ba D)}     \\
 \drv{(A\ba B)\land(C\ba D);
      [\land\sup\ba]-;
      (A\land C)\ba(B\land D)}
&\drv{(A\ba B)\lor(C\ba D);
      [\lor\sup\ba]-;
      (A\lor C)\ba(B\lor D)}
&\drv{(A\bb B)\ba(C\bb D);
      [\ba\sup\bb]-;
      (A\ba C)\bb(B\ba D)}      \\\noalign{\bigskip}
 \drv{(A\lor B)\lor(C\lor D);
      [\lor\sdn\lor]-;
      (A\lor C)\lor(B\lor D)}
&\drv{(A\land B)\lor(C\land D);
      [\land\sdn\lor]-;
      (A\lor C)\land(B\lor D)}
&\drv{(A\ba B)\lor(C\ba D);
      [\ba\sdn\lor]-;
      (A\lor C)\ba(B\lor D)}    \\
 \drv{(A\lor B)\land(C\lor D);
      [\lor\sdn\land]-;
      (A\land C)\lor(B\lor D)}
&\drv{(A\land B)\land(C\land D);
      [\land\sdn\land]-;
      (A\land C)\land(B\lor D)}
&\drv{(A\ba B)\land(C\ba D);
      [\ba\sdn\land]-;
      (A\land C)\ba(B\lor D)}   \\
 \drv{(A\lor B)\ba(C\lor D);
      [\lor\sdn\ba]-;
      (A\ba C)\lor(B\ba D)}
&\drv{(A\land B)\ba(C\land D);
      [\land\sdn\ba]-;
      (A\ba C)\land(B\ba D)}
&\drv{(A\ba B)\bb(C\ba D);
      [\ba\sdn\bb]-;
      (A\bb C)\ba(B\bb D)}      \\
\end{array}
\]
\caption{System $\DTsa$ (excluding equations). Every inference rule is generated by the unique subatomic rule scheme, and no instances of the scheme are excepted from inclusion. Atoms $\ba$ and $\bb$ range over all of $\atoms$. Some rules such as $\ba\sdn\bb$ and $\land\sdn\lor$ appear twice (denoted by distinct names).}
\Description{System $\DTsa$ (excluding equations). Every inference rule is generated by the unique subatomic rule scheme, and no instances of the scheme are excepted from inclusion. Atoms $\ba$ and $\bb$ range over all of $\atoms$. Some rules such as $\ba\sdn\bb$ and $\land\sdn\lor$ appear twice (denoted by distinct names).}
\label{FigSystem}
\end{figure}

See Figure~\ref{FigSystem} for the complete list of non-equality inference rules of $\DTsa$. The list of rules in the figure is only provided to facilitate the reading of the various derivations in this paper. We hope that the reader keeps in mind that all those rules are generated by the subatomic scheme and, indeed, the best way to understand our work is to familiarise oneself with the scheme rather than using Figure~\ref{FigSystem} as a dictionary.

\newcommand{\SKSsa}{{\SKS^{\mathsf{sa}}}}
The first subatomic system for classical logic was introduced in \cite{AlerGugl:17:Subatomi:dn} as $\mathsf{SAKS}$. In this paper, we refer to that system as $\SKSsa$. Note that we deal with commutativity and associativity of $\land$ and $\lor$ using inference rules of subatomic shape, rather than equality rules.

\begin{definition}
The subatomic proof system for classical logic, $\SKSsa$, is the set of inference rules $\{{\lor\sdn\land},{\land\sdn\lor},{\lor\sdn\lor},{\land\sup\lor},{\lor\sup\land},{\land\sup\land}\} \cup(\bigcup_{\ba\in\atoms}\{{\lor\sdn\ba},{\land\sdn\ba},{\land\sup\ba},{\lor\sup\ba}\})$, together with the set of equality rules generated by $=$ from Definition~\ref{DefEquiv}.
\end{definition}

\begin{figure}[t]
\[
\drv{1;
     [=]-;
     \left(
     1\lor\left(\,
          \drB{(0\land1)\lor(1\land0);
               [\land\sdn\lor]-;
               (0\lor1)\land(1\lor0);
               [=]-;
               1}
          \bb0
          \right)
     \right)
     \ba(0\lor(1\bb1));
     [\lor\sdn\ba]-;
     (1\ba0)\lor\drB{(1\bb0)\ba(1\bb1);
                     [\bb\sdn\ba]-;
                     (1\ba1)\bb(0\ba1)}}
\]
\caption{An example proof in system $\DTsa$.}
\Description{An example proof in system $\DTsa$.}
\label{FigSmallProof}
\end{figure}

The system $\DTsa$ can be seen as the subatomic system $\SKSsa$ for classical logic extended with the rule $\ba\sdn\bb$ (or, equivalently, $\ba\sup\bb$), which allows the manipulation of decision trees. For example, the proof in Figure~\ref{FigSmallProof} would be in $\SKSsa$ except for this rule. In other words, we can define $\SKSsa=\DTsa\setminus{\bigcup}_{\ba,\bb\in\atoms}\{{\ba\sdn\bb},{\ba\sup\bb}\}$.

Note that this definition includes some rules not originally in $\SKSsa$ as it appears in \cite{AlerGugl:17:Subatomi:dn}, namely $\{{\land\sdn\land},{\ba\sdn\land},{\lor\sup\lor},{\ba\sup\lor}\}$. However, each of these rules can be replaced with a derivation in the original system for only a linear cost in complexity. An example of this can be found after the following definition, which defines deep-inference proofs according to the open deduction formalism \cite{GuglGundPari::A-Proof-:fk}.

\newcommand{\wid}{\mathop{\mathsf w\strut}}
\newcommand{\hei}{\mathop{\mathsf h\strut}}
\newcommand{\pr}{\mathop{\mathsf{pr}\strut}}
\newcommand{\cn}{\mathop{\mathsf{cn}\strut}}
\newcommand{\proofsys}{{\mathcal S}}
\begin{definition}
Given a subatomic proof system $\proofsys$, a \emph{derivation} $\phi$ in $\proofsys$ with \emph{premiss} $A\ideq\pr\phi$ and \emph{conclusion} $B\ideq\cn\phi$, also said \emph{from} $A$ \emph{to} $B$, is denoted by
\[
\drB{A;[\phi]|[\proofsys];B}\;;
\]
its \emph{width} is denoted by $\wid\phi$ and its \emph{height} by $\hei\phi$. We define derivations and the functions $\pr$, $\cn$, $\wid$ and $\hei$ inductively; given $\psi$ and $\chi$ derivations in $\proofsys$, a derivation $\phi$ can be:
\begin{enumerate}
\item $\phi\ideq A$, where $A$ is a formula; in this case, $\pr\phi\ideq\cn\phi\ideq A$, $\wid\phi$ is the size of $A$ and $\hei\phi=0$;
\item a \emph{composition by inference}
\[
\phi
\ideq
\drB{\psi;
     [\rho]-;
     \chi}\;,
\]
where $\drv{\cn\psi;[\rho]-;\pr\chi}$ is an instance of an inference rule in $\proofsys$, which could be an equality rule; in this case, $\pr\phi\ideq\pr\psi$, $\cn\phi\ideq\cn\chi$, $\wid\phi=\max(\wid\psi,\wid\chi)$ and $\hei\phi=\hei\psi+\hei\chi+1$;
\item a \emph{composition by connective}
\[
\phi\ideq(\psi\salp\chi)\;,
\]
where $\alpha\in\{{\lor},{\land}\}\cup\atoms$; in this case, $\pr\phi\ideq(\pr\psi)\salp(\pr\chi)$, $\cn\phi\ideq(\cn\psi)\salp(\cn\chi)$, $\wid\phi=\wid\psi+\wid\chi$ and $\hei\phi=\max(\hei\psi,\hei\chi)$.
\end{enumerate}
We consider composition by inference associative. We call a derivation with premiss $1$ a \emph{proof}. The \emph{size} of a derivation is the number of occurrences of units appearing within it. We omit the name of a derivation or of a proof system if there is no ambiguity. The boxes around derivations are essentially brackets, and so we sometimes also omit them.
\end{definition}

\begin{remark}
Later in the text, we refer to the \emph{mirror image} of a derivation. For example, the following two rules are one the mirror image of the other:
\[
\drv{(A\lor B)\land(C\lor D);
     [\lor\sdn\land]-;
     (A\land C)\lor(B\lor D)}
\quad\text{and}\quad
\drv{(D\lor C)\land(B\lor A);
     -;
     (D\lor B)\lor(C\land A)}\;.
\]
Technically, the rule at the right is not included in the proof system. We can either choose to include the mirror image rules in the proof system, or for each mirror image construction we can make use of commutativity. In both cases, all important proof-theoretic properties are preserved.
\end{remark}

\begin{proposition}
System\/ $\DTsa$ is sound with respect to the semantics defined by\/ $\sem{\,\cdot\,}$, \emph{i.e.}, if there exists a derivation in\/ $\DTsa$ with premiss $A$ and conclusion $B$, then $A$ implies $B$.
\end{proposition}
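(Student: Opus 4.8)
The plan is to prove soundness by structural induction on the derivation $\phi$, following the inductive definition of derivations. Since a derivation from $A$ to $B$ is built up from formulae via composition by inference and composition by connective, I would show that the relation ``$A$ implies $B$'' (i.e.\ $\sem A_X \le \sem B_X$ for all $X$) is preserved at each construction step. The base case $\phi \ideq A$ is trivial since $\sem A_X \le \sem A_X$. For composition by connective $\phi \ideq (\psi \salp \chi)$, I would use the induction hypotheses $\pr\psi$ implies $\cn\psi$ and $\pr\chi$ implies $\cn\chi$, together with the observation that each connective is monotone in each argument: for $\lor$ and $\land$ this is the usual monotonicity of Boolean disjunction and conjunction, and for an atom $\ba$ it follows because $\sem{A \ba B}_X$ equals either $\sem A_X$ or $\sem B_X$ depending only on $X(\ba)$, so if $\sem{A}_X \le \sem{A'}_X$ and $\sem{B}_X \le \sem{B'}_X$ then $\sem{A \ba B}_X \le \sem{A' \ba B'}_X$. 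Transitivity of $\le$ then handles composition by inference, reducing everything to the single-step case.

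The crux is therefore the single-step case: I must check that every instance of an equality rule and every instance of the subatomic rule scheme is sound, i.e.\ that premiss implies conclusion (in fact, for the rule scheme, that premiss and conclusion are semantically equivalent). For equality rules this is immediate from Definition~\ref{DefEquiv}: each of the six generating equations is a semantic identity (e.g.\ $\sem{0 \ba 0}_X = 0 = \sem 0_X$ and $\sem{1 \ba 1}_X = 1 = \sem 1_X$ regardless of $X(\ba)$, and the classical unit equations are standard), and $\sem{\,\cdot\,}_X$ respects contexts by a trivial induction, so $A = B$ implies $\sem A_X = \sem B_X$. For the subatomic scheme, I would fix an assignment $X$ and verify that both schematic rules
\[
\drv{(A\sbet B)\salp(C\mathbin{\sup\sbet}D);[\alpha\sup\beta]-;(A\salp C)\sbet(B\salp D)}
\quad\text{and}\quad
\drv{(A\sbet B)\salp(C\sbet D);[\beta\sdn\alpha]-;(A\salp C)\sbet(B\mathbin{\sdn\salp}D)}
\]
are semantically valid for all choices of $\alpha,\beta \in \{\lor,\land\}\cup\atoms$.

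The main obstacle is the sheer number of cases generated by letting $\alpha$ and $\beta$ each range over $\{\lor, \land\} \cup \atoms$, combined with the fact that the ``dual'' operators $\sup\beta$ and $\sdn\alpha$ behave differently on $\land/\lor$ than on atoms. The cleanest way to manage this is to split into three regimes according to how many of $\alpha,\beta$ are atoms. When neither is an atom we are in the purely propositional fragment and each instance is a familiar Boolean tautology-preserving law (distributivity, associativity, medial/switch-like rearrangements), valid by a finite truth-table check that can be stated once by evaluating both sides. When both $\alpha$ and $\beta$ are atoms $\ba, \bb$, the key point is that $\sup\bb = \sdn\bb = \bb$ and $\sdn\ba = \sup\ba = \ba$, so both schematic rules reduce to the single shape $(A\bb B)\ba(C\bb D) \leftrightarrow (A\ba C)\bb(B\ba D)$; here I case on the four combinations of $X(\ba), X(\bb) \in \{0,1\}$ and observe that each side selects the ``diagonal'' leaf $A$, $B$, $C$ or $D$ accordingly, so they agree. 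The mixed case — one of $\alpha,\beta$ a connective, the other an atom — is handled similarly: case on the value of the atom, which collapses the atom-connective to one of its two arguments on each side, and then the remaining identity is a trivial Boolean one. Finally I would remark that adding the mirror-image rules changes nothing, since $\lor$ and $\land$ are commutative in the semantics. Once the single-step case is dispatched, the induction closes immediately.
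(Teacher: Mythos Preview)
Your approach is correct and is precisely what the paper's one-line proof (``By inspection of the rules'') amounts to when written out in full: structural induction on derivations, reducing to monotonicity of connectives for horizontal composition, transitivity for vertical composition, and a case check of each rule instance.

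One inaccuracy is worth flagging. Your parenthetical claim that for the subatomic rule scheme ``premiss and conclusion are semantically equivalent'' is false in general. For example, in the rule $\lor\sdn\land$, taking $A=B=0$ and $C=D=1$ gives premiss value $0$ and conclusion value $1$. Similarly, in your mixed case, the collapse does not always leave an identity: in $\ba\sup\lor$ with $X(\ba)=1$, the premiss evaluates to $\sem C_X\land\sem D_X$ while the conclusion evaluates to $\sem C_X\lor\sem D_X$, which is only an implication. So the phrase ``the remaining identity is a trivial Boolean one'' should read ``the remaining \emph{inequality} is a trivial Boolean one''. This does not damage the soundness argument itself, since you only need implication, but the wording should be corrected.
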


\begin{proof}
By inspection of the rules.
\end{proof}

\begin{remark}
For every inference rule $\alpha\sdn\beta$ of subatomic shape from $A$ to $B$, there exists a \emph{dual inference rule} $\gamma\sup\delta$ of subatomic shape from $\bar B$ to $\bar A$, with $\gamma\ideq\bar\alpha$ and $\delta\ideq\bar\beta$, and vice versa. Thus, given a derivation $\phi$ from $A$ to $B$ in $\DTsa$, we can obtain a \emph{dual derivation} from $\bar B$ to $\bar A$ by inverting $\phi$, dualising every formula occurring in it and renaming each inference rule as its dual. For example, the dual of
\[
\drv{\drB{(A\lor B)\land(C\lor D);
          [\lor\sdn\land]-;
          (A\land C)\lor(B\lor D)}
     \land
     \drB{(E\lor F)\land(G\lor H);
          [\lor\sdn\land]-;
          (E\land G)\lor(F\lor H)};
     [\lor\sdn\land]-;
     ((A\land C)\land(E\land G))\lor
     ((B\lor  D)\lor (F\lor  H))}
\]
is
\[
\drv{((\bar A\lor \bar C)\lor (\bar E\lor \bar G))\land
     ((\bar B\land\bar D)\land(\bar F\land\bar H));
     [\land\sup\lor]-;
     \drB{(\bar A\lor \bar C)\land(\bar B\land\bar D);
          [\land\sup\lor]-;
          (\bar A\land\bar B)\lor (\bar C\land\bar D)}
     \lor
     \drB{(\bar E\lor \bar G)\land(\bar F\land\bar H);
          [\land\sup\lor]-;
          (\bar E\land\bar F)\lor (\bar G\land\bar H)}}\;.
\]
The dual of a sound derivation is sound.
\end{remark}

The following definition introduces derivation composition.

\begin{definition}
Let $\phi$ and $\psi$ be two derivations such that $\cn\phi\ideq\pr\psi$. We define
\[
\drB{\phi;
     .;
     \psi}\;,
\]
which we call the \emph{(synchronal) composition} of $\phi$ and $\psi$, as follows:
\begin{enumerate}\setlength{\itemsep}{1pc}
\item if $\phi\ideq A\in\DT$, then $\drB{A;.;\psi}\ideq\psi$; similarly, if $\psi\ideq A\in\DT$, then $\drB{\phi;.;A}\ideq\phi$;
\item if $
\phi\ideq\drB{\chi;[\rho]-;\omega}\;$, then $\drB{\drB{\chi;[\rho]-;\omega};.;\psi}\ideq\drB{\chi;[\rho]-;\drB{\omega;.;\psi}}\;$;
similarly, if $
\psi\ideq\drB{\chi;[\rho]-;\omega}\;$, then $\drB{\phi;.;\drB{\chi;[\rho]-;\omega}}\ideq\drB{\drB{\phi;.;\chi};[\rho]-;\omega}\;$;
\item if $\phi\ideq\chi\salp\omega$ and $\psi\ideq\chi'\salp\omega'$, then
$\drB{\chi\salp\omega;.;\chi'\salp\omega'}\ideq\drB{\chi;.;\chi'}\salp\drB{\omega;.;\omega'}\;$.
\end{enumerate}
\end{definition}

For example, we have
\[
\drB{\drB{(A\land B)\lor(C\land D);
          [\land\sdn\lor]-;
          (A\lor C)\land(B\lor D)}
     \land E;
     .;
     \drB{(A\lor C)\land(B\lor D);
          [\lor\sdn\land]-;
          (A\land B)\lor(C\lor D)}\land E}
\ideq
\drB{(A\land B)\lor(C\land D);
     [\land\sdn\lor]-;
     (A\lor C)\land(B\lor D);
     [\lor\sdn\land]-;
     (A\land B)\lor(C\lor D)}
\land E
\;.
\]

We call cuts those inferences that correspond to an atomic cut in system $\SKS$ via the embedding in Remark~\ref{RemEmbed} and the equational theory $=$.

\begin{definition}
A \emph{cut on} $\ba$ is any instance of the rule
\[
\drv{(A\ba B)\land(C\ba D);
     [\land\sup\ba]-;
     (A\land C)\ba(B\land D)}\;,
\]
such that $A=D=0$ and $B=C=1$, or $A=D=1$ and $B=C=0$ in the equational theory $=$. We define an \emph{identity} as the dual of a cut. A \emph{cut-free} (resp., \emph{identity-free}) derivation is a derivation that contains no cuts (resp., identities) on any atom.
\end{definition}

For example, a cut on $a$, corresponding to the standard atomic cut rule
\[
\drv{a\land\bar a;
-;
0}
\]
is
\[
\drv{((0\land0)\ba  (0\lor1))\land((1\lor1)\ba  (0\land1));
     [\land\sup\ba]-;
     ((0\land0)\land(1\lor1))\ba  ((0\lor1)\land(0\land1))}\;.
\]

At the subatomic level, what we call a cut does not look at all like a cut in the traditional sense. However, a subatomic cut is the only way to produce a standard cut via the embedding. Therefore, if we eliminate the subatomic cuts, we also eliminate the standard ones.

Indeed, the whole normalisation theory of $\SKSsa$, when restricted to $\Prop$, can be lifted from that of $\SKS$, which is the standard system for the standard language of propositional logic. In particular, we have the following proposition.

\begin{proposition}\label{PropImplComplCutFree}
$\SKSsa$ is implicationally complete for\/ $\Prop$ and cut-free complete for tautologies of\/ $\Prop$.
\end{proposition}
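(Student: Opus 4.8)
\emph{Proof proposal.} The plan is to reduce both statements to the corresponding, already-known facts about the standard deep-inference system $\SKS$: that $\SKS$ is implicationally complete for classical propositional logic and that every propositional tautology has a cut-free $\SKS$ proof \cite{BrunTiu:01:A-Local-:mz}. These transport to $\SKSsa$ through the embedding $\p$ of Remark~\ref{RemEmbed}, exactly along the lines by which \cite{AlerGugl:17:Subatomi:dn} obtains the completeness of its system $\mathsf{SAKS}$: one establishes a \emph{simulation} in which every instance of an $\SKS$ rule, and every $\SKS$ equality step, between classical formulae $F$ and $G$ is replaced by an $\SKSsa$ derivation between $\p(F)$ and $\p(G)$ of linearly bounded size, with the cut $\mathsf i{\uparrow}$ sent to a subatomic cut (an $\land\sup\ba$ instance in the unit pattern of the cut) and every other $\SKS$ rule sent to a \emph{cut-free} $\SKSsa$ derivation --- for instance $\mathsf i{\downarrow}$ to an instance of $\lor\sdn\ba$ (itself a subatomic identity), $\mathsf c{\downarrow}$ to $\lor\sup\ba$, $\mathsf c{\uparrow}$ to $\land\sdn\ba$, switch and medial to the classical-connective instances of $\lor\sdn\land$ and $\lor\sup\land$, and the weakenings to compositions of $\ba$ with short cut-free derivations. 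Since $\SKSsa$ contains $\mathsf{SAKS}$ and its few extra rules are replaceable by $\mathsf{SAKS}$ derivations of linear cost (as noted after the definition of $\SKSsa$), the completeness and cut-free completeness of $\mathsf{SAKS}$ established in \cite{AlerGugl:17:Subatomi:dn} carry over to $\SKSsa$ on the image of $\p$.

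It then remains to bridge between $\Prop$ and the image of $\p$, since a formula $A\in\Prop$ need not be $=$-equal to any $\p(F)$: the witness $B=A$ with no nested atoms may contain atom connectives $E\ba E'$ whose arguments $E,E'$ are non-trivial atom-free formulae such as $0\lor1$. I would dispose of this with an \emph{evaluation lemma} asserting that for every atom-free formula $C$ there are $\SKSsa$ derivations $C\to\sem C$ and $\sem C\to C$, neither containing a cut, proved by induction on $C$ --- the step for $\lor$ reducing $\sem{C_1}\lor\sem{C_2}$ to $\sem{C_1\lor C_2}$ by one of $0\lor0\to0$, $0\lor1\to1$, $1\lor0\to1$, $1\lor1\to1$ (equations or instances of commutativity, derivable since $\SKSsa$ derives associativity and commutativity of $\land$ and $\lor$), and dually for $\land$. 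Applying this lemma inside $B$ to each atom-connective argument turns $B$, by a derivation that is cut-free in both directions, into a combination by $\land$ and $\lor$ of the units and of the decision trees $0\ba1$ and $1\ba0$, i.e.\ into $\p(F)$ for a classical formula $F$ (with constants, which can afterwards be eliminated). Hence every $A\in\Prop$ is cut-free interderivable in $\SKSsa$ with some $\p(F)$ having $\sem A=\sem{\p(F)}$, so that $F$ is a classical tautology whenever $A$ is.

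The two statements then follow by composition. For cut-free completeness, given a tautology $A\in\Prop$, choose $F$ as above, simulate a cut-free $\SKS$ proof of $F$ into a cut-free $\SKSsa$ proof of $\p(F)$ (using $\p(1)=1$), and append the cut-free derivation $\p(F)\to A$. For implicational completeness, given $A$ implying $B$ in $\Prop$, choose interderivable $\p(F)$, $\p(G)$ with $F$ implying $G$ classically, simulate an $\SKS$ derivation from $F$ to $G$, and compose $A\to\p(F)\to\p(G)\to B$. I expect the main obstacle to be bookkeeping rather than conceptual: carefully reconciling the deliberately minimal equational theory of Definition~\ref{DefEquiv} --- in which associativity and commutativity of $\land$ and $\lor$ are rules, not equations, so that even ``evident'' facts like $0\to1$ require a genuine derivation --- against the richer equational theory of $\SKS$, and verifying for the cut-free half that none of the replacements (the simulated equalities, the weakenings, or the evaluation lemma) ever instantiates $\land\sup\ba$ in the distinctive pattern that turns it into a cut. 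Making that correspondence and that cut-tracking precise is essentially the ``lifting'' the text alludes to; an alternative to quoting the implicational completeness of $\SKS$ is to derive it internally, from the cut-free half together with the subatomic identity and cut and De~Morgan duality, at the price of a little more work.
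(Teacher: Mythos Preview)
Your proposal is correct and follows essentially the same strategy as the paper: lift the known implicational and cut-free completeness of $\SKS$ through the embedding $\p$ (preserving cut-freeness), then bridge back to arbitrary formulae of $\Prop$. You are in fact more careful than the paper on that bridge --- the paper simply asserts that every $C\in\Prop$ satisfies $C=\p D$ for some classical $D$, whereas you correctly observe that under the minimal equational theory of Definition~\ref{DefEquiv} a formula such as $(0\lor1)\ba0$ is not literally $=$-equal to anything in the image of $\p$, and your evaluation lemma together with the cut-free interderivability step is exactly what is needed to close that gap.
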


\begin{proof}
The proof follows from the following three facts (see \cite{AlerGugl:17:Subatomi:dn} for more details).
\begin{itemize}
\item If there is a derivation from $A$ to $B$ in $\SKS$, then there is a derivation from $\p A$ to $\p B$ in $\SKSsa$, where $\p$ is the map of Remark~\ref{RemEmbed}; the latter derivation is cut-free if the former is. Such a derivation is obtained by translating via $\p$ all its formulae and applying the inference rules of $\SKSsa$ in an obvious way.
\item $\SKS$ is implicationally complete and cut-free complete.
\item For every formula $C$ in $\Prop$ there is a formula $D$ in the standard language of propositional logic such that $C=\p D$.
\qedhere
\end{itemize}
\end{proof}

The implicational completeness and cut-free completeness of $\DTsa$ also rely on that of $\SKS$, in a very similar way to the above. We prove them in Propositions~\ref{PropImplCompl} and \ref{PropCutFreeCompl}, but, for those, we need to develop a bit of theory.

\section{Completeness}

In this section, we prove the completeness of our proof system with respect to the semantics of Definition~\ref{DefSem}. We first describe a few constructions that are useful in this proof, and later in the paper.

\begin{proposition}\label{PropWeak}
In system\/ $\DTsa$, for every formula $A$, there exist cut-free and identity-free derivations of the form
\[
\drv{0;
     |;
     A}
\quad\text{and}\quad
\drv{A;
     |;
     1}\;.
\]
Their width is $O(n)$ and their height is constant, where $n$ is the size of $A$.
\end{proposition}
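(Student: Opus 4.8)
The plan is to reduce the whole statement to its special case $A\ideq1$ of the first assertion — a single fixed derivation from $0$ to $1$ — and then lift that derivation to an arbitrary $A$ without letting the height grow. The delicate point is the constant-height bound: the naive structural induction, which builds $\drv{B\mathbin\alpha C;|;1}$ from $\drv{B;|;1}$ and $\drv{C;|;1}$ by forming the composition by connective $\drv{B;|;1}\mathbin\alpha\drv{C;|;1}$ and then collapsing $1\mathbin\alpha1=1$, inserts one equality step at each level of $A$ and so has height proportional to the \emph{depth} of $A$. The way around this is to separate two jobs for which the open-deduction formalism charges very differently: first rewrite \emph{all} the units of $A$ simultaneously — which, being pure composition by connective, costs only the height of a single $0\to1$ derivation, since composition by connective takes the \emph{maximum} of the heights of its components rather than their sum — and only afterwards collapse the resulting constant formula to a unit, which the equational theory $=$ does in a single rule application.

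First I would record the base fact that $\DTsa$ contains a short, fixed, cut-free and identity-free derivation $\mathsf w$ from $0$ to $1$, for instance
\[
\mathsf w\;\ideq\;\drv{0;
                      [=]-;
                      (0\lor0)\land(0\lor1);
                      [\lor\sdn\land]-;
                      (0\land0)\lor(0\lor1);
                      [=]-;
                      1}\;,
\]
which uses only equality rules and one instance of $\lor\sdn\land$ — which is neither a cut (a particular instance of $\land\sup\ba$) nor an identity (a particular instance of $\lor\sdn\ba$). I would then set $w_0\ideq\wid\mathsf w$ and $h_0\ideq\hei\mathsf w$, absolute constants.

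Next I would introduce, for a formula $A$, the formulas $A^{+}$ and $A^{-}$ obtained from $A$ by replacing every unit occurrence by $1$ and by $0$, respectively. A routine induction on $A$, using the closure of $=$ under context and transitivity together with the equations $1\mathbin\alpha1=1$ and $0\mathbin\alpha0=0$ for $\alpha\in\{{\land},{\lor}\}\cup\atoms$, shows $A^{+}=1$ and $A^{-}=0$, so that $\drv{0;[=]-;A^{-}}$ and $\drv{A^{+};[=]-;1}$ are single equality-rule instances. The derivations themselves are then assembled as follows. Define $\kappa_A$ from $A^{-}$ to $A$ by induction on $A$ — $\kappa_0\ideq0$, $\kappa_1\ideq\mathsf w$, $\kappa_{B\mathbin\alpha C}\ideq\kappa_B\mathbin\alpha\kappa_C$ (composition by connective) — and check $\pr\kappa_A\ideq A^{-}$, $\cn\kappa_A\ideq A$; then $\drv{0;|;A}$ is $\kappa_A$ prefixed with the equality step $\drv{0;[=]-;A^{-}}$. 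Symmetrically, with $\iota_0\ideq\mathsf w$, $\iota_1\ideq1$, $\iota_{B\mathbin\alpha C}\ideq\iota_B\mathbin\alpha\iota_C$, one gets $\iota_A$ from $A$ to $A^{+}$, and $\drv{A;|;1}$ is $\iota_A$ followed by the equality step $\drv{A^{+};[=]-;1}$ (alternatively, $\drv{A;|;1}$ is the dual of $\drv{0;|;\bar A}$: dualisation preserves width and height and interchanges cuts and identities, so being cut-free and identity-free is invariant under it, while $\bar A$ ranges over all formulae). In all cases the only inference rules occurring are equality rules and the rules inside $\mathsf w$, so no cut and no identity occurs.

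Finally I would check the size bounds, which is mechanical. For the height: composition by connective does not add to height, so $\hei\kappa_A=h_0$ (or $0$, when $A$ has no unit $1$), and prefixing one equality step adds just $1$; hence $\drv{0;|;A}$ has height at most $h_0+1$, and likewise $\drv{A;|;1}$, a constant independent of $A$. For the width: composition by connective adds widths, so $\wid\kappa_A$ is the sum over the leaves of $A$ of $w_0$ at each $1$ and $1$ at each $0$; thus $n\le\wid\kappa_A\le w_0\,n=O(n)$, with $n$ the size of $A$, and prefixing the equality step keeps the width at $O(n)$; the bound for $\drv{A;|;1}$ is identical. The one step I expect to genuinely require thought — and the reason the naive induction fails — is exactly this height accounting: one must notice that collapsing $A^{+}$ (or $A^{-}$) to a unit is charged to a \emph{single} equality rule, so that the substantive work, namely flipping units, can be carried out in parallel and at constant height.
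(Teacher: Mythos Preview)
Your proposal is correct and follows essentially the same approach as the paper: replace every occurrence of $1$ in $A$ by a fixed constant-height derivation from $0$ to $1$ (the paper uses $\land\sdn\lor$ in place of your $\lor\sdn\land$, an immaterial variation), observe that the resulting premiss is equal to $0$ under $=$, and take the dual for the second derivation. Your exposition is considerably more explicit about the height accounting and the role of composition by connective in keeping it constant, but the underlying construction is the same.
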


\begin{proof}
We construct the first derivation by replacing each unit $1$ appearing in $A$ with
\[
\drv{0;
     [=]-;
     (0\land1)\lor(1\land0);
     [\land\sdn\lor]-;
     (0\lor1)\land(1\lor0);
     [=]-;
     1}\;.
\]
This makes the premiss equal to $0$ and the conclusion equal to $A$. The second derivation can be obtained as the dual of the first. No cuts or identities are used.
\end{proof}

\begin{definition}
We call \emph{weakenings} and \emph{coweakenings} respectively the derivations constructed in Proposition~\ref{PropWeak}.
\end{definition}

\begin{lemma}\label{LemmaConstr}
In system\/ $\DTsa$, for every formula $A$, there exist cut-free and identity-free derivations of the form
\[
\drv{A;
     |;
     A\ba1}
\;,\quad
\drv{A;
     |;
     1\ba A}
\quad\text{and}\quad
\drv{A\ba0;
     |;
     A}
\;,\quad
\drv{0\ba A;
     |;
     A}\;.
\]
Their height and width are $O(n)$, where $n$ is the size of $A$.
\end{lemma}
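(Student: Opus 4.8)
The goal is to build, for an arbitrary formula $A$ of size $n$, a cut-free and identity-free derivation from $A$ to $A\ba1$ (and symmetrically for the other three), whose height and width are linear in $n$. The plan is to proceed by induction on the structure of $A$, using deep inference to push the "introduction of the atom $\ba$" down through the formula, one connective at a time.

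\medskip\noindent\textbf{Approach.} The base cases are when $A$ is a unit. If $A\ideq1$, then $A\ba1\ideq1\ba1=1=A$, so an equality rule suffices. If $A\ideq0$, then $A\ba1\ideq0\ba1$, and we need a derivation $\drv{0;|;0\ba1}$; this is just a weakening from Proposition~\ref{PropWeak}, since $0\ba1$ is a formula. For the inductive step, suppose $A\ideq B\star C$ for $\star\in\{\lor,\land\}\cup\atoms$. By the induction hypothesis we have cut-free, identity-free derivations $\phi_B\colon B\to B\ba1$ and $\phi_C\colon C\to C\ba1$. Composing by the connective $\star$ we get a derivation from $B\star C$ to $(B\ba1)\star(C\ba1)$. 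Now I want one further subatomic inference bringing $(B\ba1)\star(C\ba1)$ to $(B\star C)\ba(1\star1)$, and then an equality rule to rewrite $1\star1$ as $1$ (using $1\lor1=1$, $1\land1=1$, or $1\ba1=1$ from Definition~\ref{DefEquiv}). The required inference is exactly a rule of subatomic shape with the atom $\ba$ on the inside: for $\star\in\{\lor,\land\}$ it is the rule $\ba\sup\star$ (for instance $\ba\sup\land$ taking $(B\ba1)\land(C\ba1)$ to $(B\land C)\ba(1\land1)$, which appears in Figure~\ref{FigSystem}), and for $\star\ideq\bb$ an atom it is the rule $\ba\sup\bb$ (or its mirror, taking $(B\bb1)\ba(C\bb1)$ to $(B\ba C)\bb(1\ba1)$ — one must be slightly careful here about which of the two atoms ends up outermost, but both directions are available in the system). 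Crucially, none of these rules is a cut or an identity, so cut-freeness and identity-freeness are preserved.

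\medskip\noindent\textbf{The other three derivations.} The derivation $\drv{A;|;1\ba A}$ is obtained identically, using the rules $\ba\sdn\star$ and $\ba\sdn\bb$ in place of the $\sup$-versions, or else simply as a mirror image of the first construction. The two derivations with premisses $A\ba0$ and $0\ba A$ are the duals (in the sense of the Remark on dual derivations before the composition definition) of $\drv{\bar A;|;\bar A\ba1}$ and $\drv{\bar A;|;1\ba\bar A}$ respectively, using $\bar 0=1$, $\overline{A\ba B}=\bar A\ba\bar B$ and $\bar\ba=\ba$; dualisation sends cut-free identity-free derivations to cut-free identity-free derivations and preserves size.

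\medskip\noindent\textbf{Complexity and main obstacle.} For the bounds: each inductive step on a binary connective adds a constant number of inference lines (one subatomic rule plus one or two equality rules) on top of $\max(\hei\phi_B,\hei\phi_C)$ in the worst case of nesting — actually the height adds up along the longest branch, giving $\hei=O(n)$ — while width behaves additively as for composition by connective, giving $\wid=O(n)$; the base cases contribute only constant height and the weakening for $0\ba1$ has constant height and width $O(1)$. I expect the only real subtlety, rather than an obstacle, to be bookkeeping in the atom-atom case $A\ideq B\bb C$: one has to check that the rule $\ba\sup\bb$ (respectively $\ba\sdn\bb$, or their mirrors) indeed rearranges $(B\bb1)\ba(C\bb1)$ into the shape $(B\ba C)\bb(1\ba1)$ with the right atom outermost, and then that the subatomic unit equation $1\bb1=1$ cleans up correctly — i.e., that the target formula $A\ba1\ideq(B\bb C)\ba1$ is reached up to $=$. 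This is routine given Definition~\ref{DefEquiv} and Figure~\ref{FigSystem}, and no nonlinear reasoning or case analysis on semantics is needed anywhere.
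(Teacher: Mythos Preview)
Your approach is correct and essentially identical to the paper's: structural induction on $A$, with weakening for $A\ideq0$, an equality for $A\ideq1$, and in the inductive step $A\ideq B\salp C$ one applies the inductive hypothesis on $B$ and $C$, then a single subatomic rule followed by $1\salp1=1$; the mirror and dual give the remaining three derivations. Two small corrections: the rule you invoke is $\alpha\sup\ba$ (e.g.\ $\land\sup\ba$), not $\ba\sup\alpha$, since the premiss has $\ba$ innermost and $\alpha$ outermost; and in the atom--atom case you have systematically swapped $\ba$ and $\bb$ --- after composing by $\bb$ you hold $(B\ba1)\bb(C\ba1)$, and the rule $\bb\sup\ba$ yields $(B\bb C)\ba(1\bb1)=(B\bb C)\ba1$ as desired.
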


\begin{proof}
We construct the first derivation by structural induction on $A$. The second derivation can be constructed as the mirror image. The final two can be recovered as duals of the first two.

If $A\ideq0$, the necessary derivation is given by a weakening. If $A\ideq1$, take the derivation
\[
\drv{1;[=]-;1\ba1}\;.
\]

If $A\ideq B\salp C$ for some connective $\alpha$, construct
\[
\drv{\drB{B;
          |;
          B\ba1}
     \salp
     \drB{C;
          |;
          C\ba1};
     [\alpha\sup\ba]-;
     (B\salp C)\ba\drB{1\salp1;[=]-;1}}\;.
\]
Note that the occurrence of $\alpha\sup\ba$ here is not an instance of the cut rule, and, when dualised, not an instance of the identity rule.
\end{proof}

The semantic equivalence between $B\ba C$ and $(B\land(1\ba0))\lor((0\ba1)\land C)$ is cut-free provable within the system, \emph{i.e.}, there is a cut-free derivation from one to the other and vice versa. These constructions allow us to nest and un-nest atom occurrences, which in turn allows us to show that for any formula of $\DT$, there is a semantically equivalent formula in $\Prop$ (see Definition~\ref{DefPropSDT}), and vice versa, and that this equivalence is provable within the system. This reduces the completeness of system $\DTsa$ to that of system $\SKSsa$, which is known.

\begin{lemma}\label{LemmaNest}
In system\/ $\DTsa$, for every formulae $B$ and $C$, we can construct cut-free derivations of the form
\[
\drv{B\ba C;
     |;
     (B\land(1\ba0))\lor((0\ba1)\land C)}
\quad\text{and}\quad
\drv{(B\land(1\ba0))\lor((0\ba1)\land C);
     |;
     B\ba C}\;.
\]
Their width and height are $O(n)$, where $n$ is the size of $(B\ba C)$.
\end{lemma}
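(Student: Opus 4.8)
The plan is to exhibit, for each direction, an explicit cut-free derivation assembled from the constructions of Lemma~\ref{LemmaConstr} and Proposition~\ref{PropWeak}, a single medial rule, and a handful of unit manipulations. I would first build the left-to-right derivation and then obtain the other one cheaply by dualisation.

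For a cut-free derivation from $B\ba C$ to $(B\land(1\ba0))\lor((0\ba1)\land C)$, the idea is to \emph{split the atom $\ba$} and then rebuild each branch. Using the unit equation $B\lor0=B$ and a derived step $C\to0\lor C$ (available, as usual in these systems, via the commutativity rules; see the remark on mirror images), I rewrite $B\ba C$ as $(B\lor0)\ba(0\lor C)$ and apply $\lor\sdn\ba$ to obtain $(B\ba0)\lor(0\ba C)$. It then suffices to transform the two disjuncts in place. For the left disjunct I claim a cut-free, identity-free derivation from $B\ba0$ to $B\land(1\ba0)$ of width and height $O(n)$: rewrite $B\ba0=(B\land1)\ba(0\land0)$ by the unit equations, apply $\ba\sup\land$ (equivalently $\land\sdn\ba$) to get $(B\ba0)\land(1\ba0)$, and finish by inserting the derivation from $B\ba0$ to $B$ of Lemma~\ref{LemmaConstr} inside the left conjunct. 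The right disjunct is handled symmetrically, using the derivation from $0\ba C$ to $C$ of Lemma~\ref{LemmaConstr} and, if needed, one commutativity step to order the conjuncts. Composing these two sub-derivations by the connective $\lor$ within $(B\ba0)\lor(0\ba C)$ produces the target; no instance of $\land\sup\ba$ occurs, so the derivation is cut-free, and since every ingredient has linear width and bounded or linear height, so does the composite.

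For the converse direction I would not start over. A single instance of the medial rule $\land\sdn\lor$ rewrites $(B\land(1\ba0))\lor((0\ba1)\land C)$ to $(B\lor(0\ba1))\land((1\ba0)\lor C)$. On the other hand, applying the derivation of the first direction with $B$, $C$ replaced by $\bar B$, $\bar C$ and then dualising it (inverting, negating every formula, renaming each rule as its dual) yields a derivation whose premiss is $\overline{(\bar B\land(1\ba0))\lor((0\ba1)\land\bar C)}=(B\lor(0\ba1))\land((1\ba0)\lor C)$ and whose conclusion is $\overline{\bar B\ba\bar C}=B\ba C$; since the first derivation is identity-free, its dual contains no cut. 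Concatenating the medial step with this dual derivation gives the required cut-free derivation from $(B\land(1\ba0))\lor((0\ba1)\land C)$ to $B\ba C$, and dualisation preserves the $O(n)$ bounds on width and height.

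I expect the only real difficulty to be bookkeeping rather than conceptual. Because the equational theory provides only the one-sided unit laws $A\lor0=A$ and $A\land1=A$, every point at which I want to peel off or introduce a unit on the wrong side forces an appeal to a derived commutativity or associativity step, and I must check that all of these stay cut-free — and identity-free, so that the dualisation argument for the second direction goes through — while keeping width and height linear in the size of $B\ba C$. The genuinely new content, namely splitting $B\ba C$ via $\lor\sdn\ba$ and rebuilding the two propositional-style branches using only $\ba\sup\land$ (equivalently $\land\sdn\ba$) together with the atom-nesting derivations of Lemma~\ref{LemmaConstr}, is short; most of the write-up will consist of verifying these unit manipulations.
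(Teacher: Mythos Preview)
For the first direction your construction is essentially the paper's own: split via $\lor\sdn\ba$, apply $\land\sdn\ba$ on each disjunct, finish with the derivations of Lemma~\ref{LemmaConstr}. The only difference is the order in which the unit equalities are invoked.

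For the second direction you take a genuinely different route. The paper builds it directly: lift $B$ to $B\ba1$ and $C$ to $1\ba C$ via Lemma~\ref{LemmaConstr}, apply $\land\sup\ba$ on each conjunct, then merge with $\lor\sup\ba$; because those $\land\sup\ba$ instances are literal cuts when $B=0$ or $C=0$, the paper isolates those cases and supplies explicit cut-free replacements. Your medial-plus-dualisation idea is clean and does work, but the claim you rely on---that the first-direction derivation is identity-free---is not correct in general. The $\lor\sdn\ba$ step applied to $(B\lor0)\ba(0\lor C)$ is, by the definition dual to that of cut, an identity precisely when $B=C=1$ in the equational theory. After substituting $\bar B,\bar C$ and dualising, you therefore produce a cut exactly when $B=C=0$. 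The patch is trivial (premiss and conclusion are both $=0$, so an equality rule suffices), and your edge case is in fact narrower than the paper's; but it must be singled out, since as written you assert identity-freeness outright. With that one case handled, your argument and the $O(n)$ bounds go through.
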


\begin{proof}
If $B \neq 0 \neq C$, we construct the two derivations as
\[
\drv{\drB{B;
          [=]-;
          (B\land1)\lor(0\land0)}
     \ba
     \drB{C;
          [=]-;
          (0\land0)\lor(1\land C)};
     [\lor\sdn\ba]-;
     \drB{(B\land1)\ba(0\land0);
         [\land\sdn\ba]-;
         \drB{B\ba0;[\phi]|;B}\land(1\ba0)}
     \lor
     \drB{(0\land0)\ba(1\land C);
          [\land\sdn\ba]-;
          (0\ba1)\land\drB{0\ba C;[\psi]|;C}}}
\]
and
\[
\drv{\drB{\drB{B;[\chi]|;B\ba1}
          \land
          (1\ba0);
          [\land\sup\ba]-;
          (B\land1)\ba(1\land0)}
     \lor
     \drB{(0\ba1)
          \land
          \drB{C;[\omega]|;1\ba C};
          [\land\sup\ba]-;
          (0\land1)\ba(1\land C)};
     [\lor\sup\ba]-;
     \drB{(B\land1)\lor(0\land 1);[=]-;B}
     \ba
     \drB{(1\land0)\lor(1\land C);[=]-;C}}\;,
\]
respectively, where $\phi$, $\psi$, $\chi$ and $\omega$ are instances of the constructions of Lemma~\ref{LemmaConstr}. No occurrence of rule $\land\sup\ba$ can be a cut. If $B=0$ or $C=0$, the latter derivation contains a cut in at least one of the subderivations
\[
\drB{\drB{B;[\chi]|;B\ba1}
     \land
     (1\ba0);
     [\land\sup\ba]-;
     (B\land1)\ba(1\land0)}
\quad\text{and}\quad
\drB{(0\ba1)
     \land
     \drB{C;[\omega]|;1\ba C};
     [\land\sup\ba]-;
     (0\land1)\ba(1\land C)}
\;.
\]
To fix this, we replace the problematic subderivation(s) with
\[
\drB{\drB{B;[=]-;0\ba0}
     \land(1\ba0);
     [\land\sup\ba]-;
     \left(\,\drB{0;[=]-;B}\land1\right)
     \ba
     \left(\,\drB{0;[\pi]|;1}\land0\right)}
\quad\text{and}\quad
\drB{(0\ba1)\land\drB{C;[=]-;0\ba0};
     [\land\sup\ba]-;
     \left(0\land\drB{0;[\theta]|;1}\,\right)
     \ba
     \left(1\land\drB{0;[=]-;C}\,\right)}
\;,
\]
respectively, where in each case $\pi$ and $\theta$ are instances of the weakening construction.

The width of these derivations is $O(n)$. The derivations $\phi$, $\psi$, $\chi$ and $\omega$ each have height $O(n)$ and thus each derivation has height $O(n)$, too.
\end{proof}

\begin{lemma}\label{LemmaEquiv}
In system\/ $\DTsa$, for every formula $A$ in\/ $\DT$, there exist some semantically equivalent formula $B$ in\/ $\Prop$ and cut-free derivations
\[
\drv{A;|;B}
\quad\text{and}\quad
\drv{B;|;A}
\;.
\]
\end{lemma}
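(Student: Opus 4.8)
The plan is a straightforward structural induction on $A$, with the only non-trivial case handled by Lemma~\ref{LemmaNest}. \emph{Base case.} If $A$ is a unit it already lies in $\Prop$, so I would take $B\ideq A$ and let both derivations be the zero-height derivation consisting of $A$ alone.

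\emph{Inductive step.} Assume the statement for the immediate subformulae $A_1$ and $A_2$, so that we have semantically equivalent $B_1,B_2\in\Prop$ and cut-free derivations $\phi_i$ from $A_i$ to $B_i$ and $\psi_i$ from $B_i$ to $A_i$. If $A\ideq A_1\lor A_2$ or $A\ideq A_1\land A_2$, I would set $B$ to be $B_1\lor B_2$ or $B_1\land B_2$ respectively and build the two derivations by composition by connective, from $\phi_1,\phi_2$ and from $\psi_1,\psi_2$; these are cut-free since composition by connective adds no inference rule, and $B\in\Prop$ because neither $\lor$ nor $\land$ nests an atom --- concretely, if $B_i'=B_i$ is free of nested atoms then so is the corresponding composite, which equals $B$ by closure of $=$ under context. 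If $A\ideq A_1\ba A_2$, I would first compose $\phi_1,\phi_2$ (and $\psi_1,\psi_2$) by the connective $\ba$ to obtain cut-free derivations between $A$ and $B_1\ba B_2$, then apply Lemma~\ref{LemmaNest} with the given $B_1,B_2$ to obtain cut-free derivations between $B_1\ba B_2$ and $(B_1\land(1\ba0))\lor((0\ba1)\land B_2)$, which I take as $B$; composing these four derivations in the two directions gives the required cut-free derivations between $A$ and $B$. That $B$ is semantically equivalent to $A$ follows by soundness from the constructed derivations, or directly: for an assignment $X$ with $X(\ba)=0$ the formula $B$ collapses to $\sem{B_1}_X=\sem{A_1}_X=\sem{A}_X$, and with $X(\ba)=1$ to $\sem{B_2}_X=\sem{A_2}_X=\sem{A}_X$.

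The one point calling for genuine care --- and the closest thing to an obstacle in an otherwise routine argument --- is checking that $(B_1\land(1\ba0))\lor((0\ba1)\land B_2)$ really is in $\Prop$. This is exactly what Lemma~\ref{LemmaNest} buys us: it trades the top-level atom $\ba$ of $B_1\ba B_2$, under which all of $B_1,B_2$ was nested, for the two shallow decision trees $1\ba0$ and $0\ba1$ placed beneath $\land$. Since those two trees contain no atom besides their own, non-nested root, replacing $B_1,B_2$ by nested-atom-free $B_1'=B_1$ and $B_2'=B_2$ yields a nested-atom-free formula, and closure of $=$ under context shows it equals the original; hence $B\in\Prop$. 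Everything else is bookkeeping: cut-freeness is preserved at each composition, and the degenerate cases $B_1=0$ or $B_2=0$ are already absorbed into the statement of Lemma~\ref{LemmaNest}.
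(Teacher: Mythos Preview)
Your proof is correct and proceeds by a clean structural induction on $A$, whereas the paper argues by iterated reduction: it locates a subformula $C\ba D$ with an atom inside $C$ or $D$, applies Lemma~\ref{LemmaNest} there (inside the surrounding context), observes that the number of nested atom occurrences strictly drops, and repeats until none remain; the reverse derivation is obtained by undoing these steps with the second half of Lemma~\ref{LemmaNest}.

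The two organisations use the same workhorse (Lemma~\ref{LemmaNest}) but place it differently. The paper applies it \emph{in situ}, deep inside the formula, and relies on a termination measure; your version first normalises the subformulae $A_1,A_2$ to $B_1,B_2\in\Prop$ and then applies Lemma~\ref{LemmaNest} once at the top of each atom connective. Your route sidesteps a small subtlety in the paper's argument --- the decrease of the nested-atom count actually requires choosing the decomposition so that the hole of $K\{\;\}$ is not itself under an atom connective, which the paper leaves implicit. On the other hand, the paper's procedure is adaptive: if $A$ already lies in $\Prop$ (e.g.\ $A\ideq 0\ba 1$) it stops immediately with $B\ideq A$, while your induction still unfolds every atom connective and produces a larger, though perfectly valid, $B$. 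Either way the lemma is established; your verification that $(B_1\land(1\ba0))\lor((0\ba1)\land B_2)\in\Prop$ via closure of $=$ under context is exactly the point that needs checking, and you handle it correctly.
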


\begin{proof}
We construct the first derivation. If there exist nested atoms occurring in $A$, then we have $A\ideq K\{C\ba D\}$ for some atom $\ba$, where either $C$ or $D$ contain an atom occurrence. Applying the first construction from Lemma~\ref{LemmaNest} to $C\ba D$ strictly reduces the number of nested atoms in $A$. Repeatedly applying this construction until there are no more nested atoms yields a derivation with conclusion $B$ that is semantically equivalent to $A$, and such that $B$ is in $\Prop$.

Repeatedly applying the second construction from Lemma~\ref{LemmaNest} to $B$ allows us to nest atom occurrences until we have recovered $A$.
\end{proof}

The following proposition derives from the same idea as Proposition~\ref{PropImplComplCutFree}: the implicational completeness for the extended language is derived from the implicational completeness of $\SKS$.

\begin{proposition}\label{PropImplCompl}
System\/ $\DTsa$ is implicationally complete with respect to the semantics\/ $\sem{\,\cdot\,}$, \emph{i.e.}, if $A$ implies $B$ then there exists a derivation with premiss $A$ and conclusion $B$.
\end{proposition}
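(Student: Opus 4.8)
The plan is to reduce implicational completeness of $\DTsa$ to that of $\SKS$, in exact parallel to how Proposition~\ref{PropImplComplCutFree} reduces implicational completeness of $\SKSsa$ to $\SKS$, but with one extra layer: before appealing to $\SKS$ we must first move from the full language $\DT$ into the sublanguage $\Prop$. The two ingredients are Lemma~\ref{LemmaEquiv}, which for any $\DT$ formula produces a semantically equivalent $\Prop$ formula together with derivations between the two inside $\DTsa$, and the three facts used in the proof of Proposition~\ref{PropImplComplCutFree}: that the embedding $\p$ transports $\SKS$ derivations to $\SKSsa$ derivations, that $\SKS$ is implicationally complete, and that every $\Prop$ formula equals $\p$ of some standard propositional formula.

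Concretely, suppose $A$ implies $B$. First I would apply Lemma~\ref{LemmaEquiv} twice: it gives a formula $A'\in\Prop$ semantically equivalent to $A$ with a derivation from $A$ to $A'$ in $\DTsa$, and a formula $B'\in\Prop$ semantically equivalent to $B$ with a derivation from $B'$ to $B$ in $\DTsa$. Since $\sem{A'}_X=\sem{A}_X\le\sem{B}_X=\sem{B'}_X$ for every assignment $X$, we get that $A'$ implies $B'$. As $A'$ and $B'$ are in $\Prop$, the third fact above provides standard propositional formulae $A^\circ$ and $B^\circ$ with $A'=\p A^\circ$ and $B'=\p B^\circ$; since $\sem{\,\cdot\,}$ agrees with the standard evaluation on $\Prop$, $A^\circ$ implies $B^\circ$ in the usual sense, so implicational completeness of $\SKS$ yields a derivation from $A^\circ$ to $B^\circ$ in $\SKS$, which transports along $\p$ to a derivation from $A'$ to $B'$ in $\SKSsa\subseteq\DTsa$. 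Finally I would compose, via equality rules where premisses and conclusions agree only up to $=$, the three derivations $A\to A'$, $A'\to B'$ and $B'\to B$, obtaining a derivation from $A$ to $B$ in $\DTsa$.

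I do not expect a real obstacle: the substantive work has been front-loaded into Lemma~\ref{LemmaNest} and Lemma~\ref{LemmaEquiv} (which build the nesting and un-nesting derivations) and into the already established implicational completeness of $\SKS$. The only points needing care are that semantic equivalence really does carry the implication $A$ implies $B$ down to $A'$ implies $B'$ — immediate from Definition~\ref{DefSem} — and that the three derivations glue along the equational theory $=$ through equality rules, which is precisely what a composition by inference with an $[=]$ step provides. Note that, unlike Proposition~\ref{PropImplComplCutFree}, the present statement makes no cut-freeness claim, so any cuts introduced by the $\SKS$ derivation or by the $\Prop$-to-$\DT$ direction of Lemma~\ref{LemmaNest} are harmless here; cut-free completeness for tautologies is the separate Proposition~\ref{PropCutFreeCompl}.
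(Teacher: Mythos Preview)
Your proposal is correct and follows essentially the same route as the paper: reduce to $\Prop$ via Lemma~\ref{LemmaEquiv}, invoke implicational completeness there, and compose the three derivations. The only cosmetic difference is that the paper cites Proposition~\ref{PropImplComplCutFree} (implicational completeness of $\SKSsa$ for $\Prop$) as a black box for the middle derivation, whereas you unfold that proposition and go through $\SKS$ and $\p$ explicitly; this is the same argument at a slightly lower level of abstraction.
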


\begin{proof}
Using Lemma~\ref{LemmaEquiv}, we can find formulae $C$ and $D$ in $\Prop$ that are semantically equivalent to $A$ and $B$ respectively, together with derivations $\phi$ from $A$ to $C$ and $\chi$ from $D$ to $B$ in $\DTsa$. Because $C$ and $D$ are both in $\Prop$, and because $C$ implies $D$, there exists a derivation with premiss $C$ and conclusion $D$ in system $\SKSsa$ by implicational completeness of $\SKSsa$, as per Proposition~\ref{PropImplComplCutFree}, which we call $\psi$. We can thus construct within $\DTsa$ the derivation
\[
\vbox{\hbox{%
\drv{A;[\phi]|[\DTsa ];C
     ;.;
     C;[\psi]|[\SKSsa];D
     ;.;
     D;[\chi]|[\DTsa ];B}\;.
}\kern0pt}\qedhere
\]
\end{proof}

The following proposition also derives from the same idea as Proposition~\ref{PropImplComplCutFree}. We do have here a cut-elimination procedure, but an indirect one, relying on a translation to $\Prop$.

\begin{proposition}\label{PropCutFreeCompl}
System\/ $\DTsa$ is cut-free complete with respect to the semantics\/ $\sem{\,\cdot\,}$, \emph{i.e.}, if $A$ is a tautology, there exists a cut-free proof of $A$.
\end{proposition}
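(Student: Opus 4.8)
The plan is to reuse the strategy of Proposition~\ref{PropImplCompl}, but substituting the cut-free completeness of $\SKSsa$ for tautologies of $\Prop$ (Proposition~\ref{PropImplComplCutFree}) in place of its implicational completeness. First I would apply Lemma~\ref{LemmaEquiv} to the tautology $A$, obtaining a formula $B$ in $\Prop$ that is semantically equivalent to $A$, together with a cut-free derivation $\chi$ from $B$ to $A$ in $\DTsa$; only this direction of the lemma is needed here.

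Next, since $A$ is a tautology and $B$ is semantically equivalent to $A$, the formula $B$ is itself a tautology, and it lies in $\Prop$; so Proposition~\ref{PropImplComplCutFree} provides a cut-free proof $\psi$ of $B$ in $\SKSsa$, that is, a cut-free derivation from $1$ to $B$. Because $\SKSsa$ is a subsystem of $\DTsa$ and the definition of cut — an instance of the rule $\land\sup\ba$ satisfying the prescribed unit conditions — is literally the same in both systems, $\psi$ is also a cut-free derivation in $\DTsa$. Composing $\psi$ with $\chi$ by synchronal composition yields a derivation from $1$ to $A$ in $\DTsa$; composition adds no inference rules, so this derivation remains cut-free, and it is the required cut-free proof of $A$.

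There is essentially no serious obstacle: the substantive work has already been done in Lemma~\ref{LemmaEquiv} and in the cited normalisation theory of $\SKS$. The only point requiring a moment's care is the preservation of cut-freeness under the detour through $\Prop$, and this is secured precisely because Lemma~\ref{LemmaEquiv} delivers cut-free derivations and because ``$\DTsa$-cut-free'', when restricted to the rules of $\SKSsa$, coincides with ``$\SKSsa$-cut-free''. One might also remark that the cut-free proof produced this way is, a priori, of unbounded size, since the cut-free $\SKSsa$ proof it relies on may be large; the genuinely efficient cut-free proofs promised elsewhere in the paper come from the subatomic construction based on projections, not from this translation.
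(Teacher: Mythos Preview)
Your proposal is correct and follows essentially the same approach as the paper's own proof: apply Lemma~\ref{LemmaEquiv} to obtain a semantically equivalent $B\in\Prop$ with a cut-free derivation from $B$ to $A$, invoke Proposition~\ref{PropImplComplCutFree} for a cut-free $\SKSsa$ proof of $B$, observe this is also a cut-free $\DTsa$ proof, and compose. Your additional remarks on why cut-freeness survives the composition and on the lack of size control are accurate and welcome, though the paper omits them.
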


\begin{proof}
Using Lemma~\ref{LemmaEquiv}, we can find a formula $B$ in $\Prop$ that is semantically equivalent to $A$, together with a cut-free derivation $\psi$ from $B$ to $A$. Since $B$ is a tautology in $\Prop$, there exists a cut-free proof $\phi$ of $B$ in $\SKSsa$ by the cut-free completeness of $\SKSsa$ for $\Prop$, as per Proposition~\ref{PropImplComplCutFree}. In fact, this is also a proof in $\DTsa$. Thus construct
\[
\vbox{\hbox{%
\drv{1;[\phi]|[\SKSsa];B
     ;.;
     B;[\psi]|[\DTsa ];A}\;.
}\kern0pt}\qedhere
\]
\end{proof}

Can we obtain a direct, and better, proof of cut elimination? The next section is devoted to that.

\section{Cut Elimination}

In this section, we describe a construction whereby a derivation can be projected, via a given atom, into two derivations, one for each truth value of that atom.

More in detail, we know that if we are given a formula, there is an obvious notion of left (resp., right) projection on an atom $\ba$ given by taking the value of that formula when $\ba$ is false (resp., true). This can be done both on standard and subatomic formulae. As we see in this section, in subatomic logic, we can lift the definition of projection from formulae to derivations with minimal effort, because of the linear nature of the inference scheme. Thus, we can transform a proof of $A$ into a proof of $(\lpr\ba A)\ba(\rpr\ba A)$, where $\ba$ is an atom that we are free to choose and $\lpr\ba A$ and $\rpr\ba A$ are left and right projections of $A$ on $\ba$, respectively (the precise definition follows in this section). In that proof, there are no cuts on $\ba$ because there are no occurrences of $\ba$ in the projections. Remarkably, we can also build a proof from $(\lpr\ba A)\ba(\rpr\ba A)$ to $A$ that does not contain cuts on $\ba$. Therefore, by repeating the construction on all the atoms appearing in cuts, we obtain a cut-free proof of $A$.

\begin{lemma}\label{LemmaContr}
In system\/ $\DTsa$, for every formula $A$ and every connective $\alpha\in\{\lor\}\cup\atoms$, $\beta\in\{\land\}\cup\atoms$, we can construct cut-free and identity-free derivations of the form
\[
\drv{A\salp A;|;A}
\quad\text{and}\quad
\drv{A;|;A\sbet A}
\;.
\]
Their size is $O(n^2)$, where $n$ is the size of $A$.
\end{lemma}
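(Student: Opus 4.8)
The plan is to construct the two derivations by structural induction on $A$, using the rule scheme to lift $\land$, $\lor$ and each atom $\ba$ through the copies of $A$. I will focus on the contraction-style derivation $\drv{A\salp A;|;A}$; the coderivation $\drv{A;|;A\sbet A}$ is then obtained as its dual (or mirror), using that the dual of a cut-free, identity-free derivation is cut-free and identity-free.

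For the base cases, if $A\ideq 0$ then $0\salp 0=0$ by the unit equations (for $\alpha\in\{\lor\}\cup\atoms$ this is exactly one of the equations of Definition~\ref{DefEquiv}), so an equality rule suffices; similarly if $A\ideq 1$ then $1\salp 1=1$. For the inductive step, suppose $A\ideq B\mathbin\gamma C$ for some connective $\gamma\in\{\lor,\land\}\cup\atoms$. The key observation is that the subatomic scheme always provides a rule that distributes $\alpha$ over $\gamma$: indeed, for $\alpha\in\{\lor\}\cup\atoms$ and any $\gamma$, there is a rule of shape $\alpha\sdn\gamma$ (when $\gamma\in\{\land\}\cup\atoms$, this is $\alpha\sdn\gamma$; when $\gamma\ideq\lor$, this is $\alpha\sdn\lor$, e.g. $\lor\sdn\lor$ or $\ba\sdn\lor$) rewriting $(B\mathbin\gamma C)\salp(B\mathbin\gamma C)$ to $(B\salp B)\mathbin{\gamma'}(C\salp C)$ for an appropriate $\gamma'\in\{\lor\}\cup\atoms$ — precisely, $\gamma'=\sdn\gamma$ in the notation of the scheme, so $\gamma'\ideq\lor$ if $\gamma\in\{\lor,\land\}$ and $\gamma'\ideq\gamma$ if $\gamma$ is an atom. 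In every case $\gamma'$ again lies in $\{\lor\}\cup\atoms$, so the induction hypothesis applies to both $B\salp B$ and $C\salp C$ with this $\gamma'$. We therefore build
\[
\drv{\drB{(B\mathbin\gamma C)\salp(B\mathbin\gamma C);
          [\alpha\sdn\gamma]-;
          (B\salp B)\mathbin{\gamma'}(C\salp C)};
     .;
     \drB{B\salp B;|;B}
     \mathbin{\gamma'}
     \drB{C\salp C;|;C}};
     |;
     B\mathbin\gamma C}\;,
\]
composing the chosen distribution rule with the parallel composition (by connective $\gamma'$) of the two inductive derivations. None of these rule instances is a cut or an identity, since cuts and identities are instances of $\land\sup\ba$ and $\lor\sdn\ba$ with the leaves being units in the pattern $0,1,1,0$, whereas the $\alpha\sdn\gamma$ instances used here either have the wrong outer connective ($\sdn$ versus the $\sup$ required for a cut) or, in the $\lor\sdn\ba$ case, do not have the unit pattern of an identity because the four arguments are the full formulae $B,B,C,C$ rather than suitably-chosen units; the inductively-supplied subderivations are cut-free and identity-free by hypothesis.

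For the size bound, note that the outermost derivation adds a single inference of width $O(\mathsf{size}(A))$ on top of derivations for $B$ and $C$; writing $s(A)$ for the size of the contraction derivation on $A$, we get a recurrence $s(B\mathbin\gamma C)\le s(B)+s(C)+O(\mathsf{size}(A))$ with $s(0)=s(1)=O(1)$. Since each internal node of $A$ contributes an $O(\mathsf{size}(A))$ term and there are $O(\mathsf{size}(A))$ such nodes, this solves to $s(A)=O(n^2)$ where $n$ is the size of $A$, as claimed. The main obstacle is a bookkeeping one: verifying carefully, across all nine combinations of $\alpha\in\{\lor\}\cup\atoms$ and $\gamma\in\{\lor,\land\}\cup\atoms$ (and symmetrically $\beta$ and $\gamma$ for the coderivation), that the needed instance of the scheme really exists with the correct target connective $\gamma'$ and that it is never a cut or identity — this is routine given Figure~\ref{FigSystem} and the definition of cut, but it is where all the case analysis lives. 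The only subtlety worth flagging is that when $\alpha$ and $\gamma$ are both atoms one uses the rule $\alpha\sdn\gamma$ of shape $(A\mathbin\gamma B)\mathbin\alpha(C\mathbin\gamma D)\to(A\mathbin\alpha C)\mathbin\gamma(B\mathbin\alpha D)$, so the recursion descends with the same atom $\gamma$ as outer connective, which still lies in $\{\lor\}\cup\atoms$ and hence is a legal choice for the induction hypothesis.
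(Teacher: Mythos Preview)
Your overall strategy---structural induction on $A$, using one instance of the subatomic scheme to push the outer connective $\alpha$ inside, then recursing on the two halves---is exactly the paper's approach. However, you have misread the rule scheme, and this breaks the inductive step when $\gamma=\land$. The rule you need is $\gamma\sdn\alpha$, not $\alpha\sdn\gamma$: applied to $(B\mathbin\gamma C)\salp(B\mathbin\gamma C)$ it yields $(B\salp B)\mathbin\gamma(C\mathbin{\sdn\alpha}C)$, \emph{i.e.}, the saturation $\sdn{\cdot}$ lands on $\alpha$, not on $\gamma$. Because $\alpha\in\{\lor\}\cup\atoms$, we have $\sdn\alpha=\alpha$, so the conclusion is simply $(B\salp B)\mathbin\gamma(C\salp C)$ with the middle connective unchanged. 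Your version, with $\gamma'=\sdn\gamma$, does not correspond to any rule of the system; in particular, for $\gamma=\land$ there is no rule taking $(B\land C)\salp(B\land C)$ to $(B\salp B)\lor(C\salp C)$, and even if there were, your derivation would terminate in $B\lor C$ rather than $B\land C$, so the final step to $B\mathbin\gamma C$ is unjustified.

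A related confusion appears in your bookkeeping: the recursion does not ``descend with $\gamma'$ as outer connective''. The contraction connective $\alpha$ is fixed throughout the induction; the inductive hypothesis is invoked on $B\salp B$ and $C\salp C$ with the \emph{same} $\alpha$, and the only condition needed is $\alpha\in\{\lor\}\cup\atoms$, which holds by assumption. There is no constraint on $\gamma$ (or on any $\gamma'$) for the hypothesis to apply. Once you correct the rule to $\gamma\sdn\alpha$ and drop the spurious $\gamma'$, the construction and the $O(n^2)$ size bound go through exactly as you wrote; the cut- and identity-freeness argument is also fine, since $\gamma\sdn\alpha$ with $\alpha\in\{\lor\}\cup\atoms$ is never a cut, and when it has the shape $\lor\sdn\ba$ the premiss $(B\lor C)\ba(B\lor C)$ has identical halves and so cannot match the identity pattern.
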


\begin{proof}
We construct the first derivation by structural induction on $A$. The second derivation can be recovered as the dual of the first.

\newcommand{\sgam}{\mathbin\gamma}
If $A\ideq0$, take $\drv{0\salp0;[=]-;0}$. If $A\ideq1$, take $\drv{1\salp1;[=]-;1}$. Otherwise, we have $A\ideq B\sgam C$, for some connective $\gamma$, and we construct the derivation
\[
\drv{(B\sgam C)\salp(B\sgam C);
     [\gamma\sdn\alpha]-;
     \drB{B\salp B;|;B}
     \sgam
     \drB{C\salp C;|;C}}\;.
\]
The width of the derivation is $O(n)$, and its height is $O(n)$.
\end{proof}

\begin{definition}
We call \emph{contractions} and \emph{cocontractions} respectively the derivations constructed in Lemma~\ref{LemmaContr}.
\end{definition}

\begin{lemma}\label{LemmaMerge}
In system\/ $\DTsa$, for every formulae $A$ and $B$ and every context $K\{\;\}$, there exist cut-free and identity-free derivations of the form
\[
\drv{K\{A\}\land B;
     |;
     K\{A\land B\}}
\quad\text{and}\quad
\drv{ K\{A\lor B\};
     |;
     K\{A\}\lor B}
\;.
\]
Their width is $O(m + n + l)$ and their height is $O(ml)$, where $m$ is the size of $K\{\;\}$, $n$ is the size of $A$ and $l$ is the size of $B$.
\end{lemma}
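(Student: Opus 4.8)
The plan is to build the first derivation by structural induction on the context $K\{\;\}$, pushing the conjunct $B$ inward through one connective at a time until it reaches the hole; the second derivation is then obtained as the dual of the first. Dualisation reverses a derivation, negates every formula, and exchanges cuts with identities, so a derivation that is both cut-free and identity-free dualises to one that is again cut-free and identity-free; and, writing $K^*$ for the context obtained by negating $K\{\;\}$ around its hole, the dual of a derivation from $K\{A\}\land B$ to $K\{A\land B\}$ is a derivation from $K^*\{\bar A\lor\bar B\}$ to $K^*\{\bar A\}\lor\bar B$, which after renaming is precisely of the required form from $K\{A\lor B\}$ to $K\{A\}\lor B$.

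For the base case $K\{\;\}\ideq\{\;\}$ the formulae $K\{A\}\land B$ and $K\{A\land B\}$ coincide, so the trivial one-formula derivation works. For the inductive step, write $K\{\;\}\ideq K'\{\;\}\salp D$ for some connective $\alpha$ (the case where the hole sits to the right of $\alpha$ is handled symmetrically, using commutativity as in the remark on mirror images). If $\alpha$ is $\lor$, I would rewrite $(K'\{A\}\lor D)\land B$ as $(K'\{A\}\lor D)\land(B\lor0)$ by an equality rule, apply $\lor\sdn\land$ to reach $(K'\{A\}\land B)\lor(D\lor0)=(K'\{A\}\land B)\lor D$, and then invoke the induction hypothesis inside the left disjunct to get $K'\{A\land B\}\lor D\ideq K\{A\land B\}$. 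If $\alpha$ is $\land$, the same pattern works with $B\land1$ in place of $B\lor0$ and $\land\sup\land$ in place of $\lor\sdn\land$, reaching $(K'\{A\}\land B)\land(D\land1)=(K'\{A\}\land B)\land D$ and then recursing on the left conjunct. None of the rules used in these two cases is a cut or an identity, since cuts and identities are instances of $\land\sup\ba$ and $\lor\sdn\ba$ for atoms $\ba$ only.

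The substantive case is $\alpha\ideq\bc$ for an atom $\bc$. Here I would first apply a cocontraction from Lemma~\ref{LemmaContr} (taking the connective $\beta$ there to be $\bc$) to duplicate $B$, turning $(K'\{A\}\bc D)\land B$ into $(K'\{A\}\bc D)\land(B\bc B)$; then apply $\land\sup\bc$ to reach $(K'\{A\}\land B)\bc(D\land B)$; then, composing by $\bc$, apply the induction hypothesis on the left to obtain $K'\{A\land B\}$ and, on the right, a coweakening (Proposition~\ref{PropWeak}) on $B$ followed by the equality $D\land1=D$ to obtain $D$, yielding $K'\{A\land B\}\bc D\ideq K\{A\land B\}$. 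One must check that this instance of $\land\sup\bc$ is not a cut: matching the rule's four slots to $K'\{A\}$, $D$, $B$, $B$ in order, a cut would force the last two slots, both equal to $B$, to take the values $0$ and $1$ in one order or the other, which is impossible since $0\neq1$ in the equational theory. As cocontractions and coweakenings are themselves cut-free and identity-free, the whole derivation is cut-free and identity-free.

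For the complexity bounds, the recursion has depth at most the number of connectives of $K\{\;\}$, hence $O(m)$ levels; at each level the main formula has size $O(m+n+l)$ (with at most a transient second copy of $B$ present) and the auxiliary cocontractions and coweakenings add width $O(l)$, so the composed derivation has width $O(m+n+l)$; each level contributes $O(1)$ to the height in the $\lor$ and $\land$ cases and $O(l)$ in the atom case (the cocontraction of $B$ being the dominant term, by Lemma~\ref{LemmaContr}), and these contributions add down the $O(m)$ levels, giving height $O(ml)$. I expect the atom case to be the main obstacle: forcing $B$ through an atom connective requires a duplication, and one must route the unused copy into a coweakening on the correct branch without breaking soundness, while also verifying that the resulting $\land\sup\bc$ never degenerates into a cut.
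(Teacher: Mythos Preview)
Your proof is correct and follows the same inductive scheme as the paper's: push $B$ through $K\{\;\}$ one connective at a time, then dualise for the second derivation. The $\lor$ and $\land$ cases are identical to the paper's. The only difference is in the atom case: the paper pads $B$ to $B\bc1$ using Lemma~\ref{LemmaConstr}, applies $\land\sup\bc$ to reach $(H\{A\}\land B)\bc(C\land1)$, and finishes on the right with the equality $C\land1=C$; you instead duplicate $B$ to $B\bc B$ via a cocontraction, apply $\land\sup\bc$, and then discard the surplus copy by a coweakening. Both auxiliary constructions have height $O(l)$, so the $O(ml)$ bound is the same either way, and both avoid cuts---in the paper's version the $1$ on the right of $\bc$ already rules out a cut, while in yours your observation that the two right-hand slots are both $B$ does the same job. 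The paper's version is marginally leaner (one pad rather than duplicate-then-discard), but there is nothing to fix in yours.
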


\begin{proof}
We construct the first derivation; the second can be obtained as the dual of the first. We proceed by structural induction on $K\{\;\}$. In the base case, $K\{\;\}\ideq\{\;\}$ and we take the trivial derivation. If $K\{\;\}\ideq H\{\;\}\land C$, $K\{\;\}\ideq H\{\;\}\lor C$ or $K\{\;\}\ideq H\{\;\}\ba C$, where $\ba$ is any atom, take
\[
\drB{(H\{A\}\land C)\land\drB{B;[=]-;B\land1};
     [\land\sup\land]-;
     \drB{H\{A\}\land B;|;H\{A\land B\}}
     \land
     \drB{C\land1;[=]-;C}}
\;,\quad
\drB{(H\{A\}\lor C)\land\drB{B;[=]-;B\lor0};
     [\land\sdn\lor]-;
     \drB{H\{A\}\land B;|;H\{A\land B\}}
     \lor
     \drB{C\lor0;[=]-;C}}
\quad\text{and}\quad
\drB{(H\{A\}\ba C)\land\drB{B;[\phi]|;B\ba1};
     [\land\sup\ba]-;
     \drB{H\{A\}\land B;|;H\{A\land B\}}
     \ba
     \drB{C\land1;[=]-;C}}
\;,
\]
respectively, where $\phi$ is as in Lemma~\ref{LemmaConstr}. The remaining cases of $K\{\;\}\ideq C\land H\{\;\}$, $K\{\;\}\ideq C\lor H\{\;\}$ and $K\{\;\}\ideq C\ba H\{\;\}$ can all be dealt with similarly. These derivations contain no instances of the cut or identity rule.

The width of each derivation is $O(n+m+l)$. Note that the height of the third case dominates the other cases; $\phi$ is of height $O(l)$, whereas the corresponding part of the other two derivations is an equality rule of constant height. There are $O(m)$ inductive steps in the derivation, each contributing height $O(l)$, thus the height of the derivation in total is $O(ml)$.
\end{proof}

In decision trees such as $(A\ba B)\ba C$, the truth of the formula does not depend on $B$. The equivalence of that formula to $A\ba C$ is provable within the system. We state and prove the general case below.

\begin{lemma}\label{LemmaDTWEak}
In system\/ $\DTsa$, for every formulae $A$, $B$ and $C$ and every atom $\ba$, there exist cut-free derivations of the form
\[
\drv{K\{A\}\ba  ;|;K\{A\ba B\}\ba C}
\quad\text{and}\quad
\drv{C\ba K\{A\};|;C \ba K\{B\ba A\}}
\;,
\]
for any context $K\{\;\}$. Their width is $O(m+n)$ and their height is $O(m+n)$, where $m$ is the size of $K\{\;\}$ and $n$ is the size of\/ $(A\ba B)\ba C$.
\end{lemma}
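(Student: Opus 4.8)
The plan is to construct the first derivation; the second is obtained as its mirror image. I argue by structural induction on the context $K\{\;\}$.

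\emph{Base case} ($K\{\;\}\ideq\{\;\}$). Here the goal is a cut-free derivation from $A\ba C$ to $(A\ba B)\ba C$: that is, I must insert a dead decision branch $B$ into the decision tree $A\ba C$ without altering its value. I do this by passing through the propositional representation of Lemma~\ref{LemmaNest}. First, apply the first construction of Lemma~\ref{LemmaNest} to derive $(A\land(1\ba0))\lor((0\ba1)\land C)$ from $A\ba C$. Next, rewrite the left disjunct $A\land(1\ba0)$ into $(A\ba B)\land(1\ba0)$: duplicate the factor $1\ba0$ by a constant-size cocontraction (Lemma~\ref{LemmaContr}) followed by a reassociation, obtaining $(A\land(1\ba0))\land(1\ba0)$; on the inner $A\land(1\ba0)$, use the equality $X=X\lor0$ together with a weakening $0\to(0\ba1)\land B$ to reach $((A\land(1\ba0))\lor((0\ba1)\land B))\land(1\ba0)$; then apply the second construction of Lemma~\ref{LemmaNest}, deep inside the $\{\;\}\land(1\ba0)$ context, to fold $(A\land(1\ba0))\lor((0\ba1)\land B)$ back into $A\ba B$. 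A final application of the second construction of Lemma~\ref{LemmaNest} then yields $(A\ba B)\ba C$. Every component is cut-free, weakenings have constant height, and Lemma~\ref{LemmaNest} contributes $O(n)$, so this derivation has width and height $O(n)$.

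\emph{Inductive step.} Write $K\{\;\}\ideq H\{\;\}\mathbin\gamma D$ (the case $C'\mathbin\gamma H\{\;\}$ being entirely symmetric). The idea is to distribute the outermost $\ba$ across $\gamma$, invoke the induction hypothesis on $H\{\;\}$, and merge back. If $\gamma=\land$: use the equality $C=C\land1$ and the rule $\land\sdn\ba$ to pass from $(H\{A\}\land D)\ba C$ to $(H\{A\}\ba C)\land(D\ba1)$, apply the induction hypothesis to the left conjunct, and then apply $\land\sup\ba$ and an equality to recover $(H\{A\ba B\}\land D)\ba C$; none of the rule instances here is a cut. The case $\gamma=\lor$ is identical, with $\land\sdn\ba,\land\sup\ba$ replaced by their $\lor$-counterparts and the padding unit $1$ replaced by $0$. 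If $\gamma=\bb$ is an atom, the unit pad $C\bb1$ cannot be removed afterwards, so instead I reduce this case to the previous ones by using Lemma~\ref{LemmaNest} to un-nest $\bb$ — or, more efficiently, I replace $K\{\;\}$ at the very start by a semantically equivalent conjunction–disjunction context with the hole preserved (a single global un-nesting of the nested atoms of $K$), so that only the cases $\gamma\in\{\land,\lor\}$ occur in the induction, each contributing $O(1)$ to height and width. Together with the base case, the resulting derivation has width and height $O(m+n)$.

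The main obstacle is the base case: turning the informal observation that ``$B$ is irrelevant'' into an explicit, cut-free, linearly sized derivation. A secondary point requiring care is the atom case of the inductive step, where the padding trick available for $\land$ and $\lor$ is not reversible; keeping the overall size linear — rather than quadratic — is precisely why it is best to eliminate the nested atoms of $K$ once, globally, instead of level by level.
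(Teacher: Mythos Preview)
Your route differs from the paper's: you induct on $K\{\;\}$, whereas the paper gives a single closed-form derivation (Figure~\ref{FigDTWEak}) that un-nests the outer $\ba$ once via Lemma~\ref{LemmaNest}, pushes the constant-size factor $(1\ba0)$ deep into $K$ with one appeal to Lemma~\ref{LemmaMerge}, inserts $B$ locally by a weakening plus Lemma~\ref{LemmaNest}, and then re-nests. Your base case is essentially that construction with $K=\{\;\}$, and your $\land/\lor$ inductive steps are sound and contribute $O(1)$ height per level (with the caveat that the final $\land\sup\ba$ can degenerate into a cut when, e.g., $H\{A\ba B\}=1$ and $C=D=0$; this needs the same kind of special-case patch as in the proof of Lemma~\ref{LemmaNest}, which you do not mention).

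The genuine gap is the atom case and, with it, the complexity bound. Neither fix you propose yields height $O(m+n)$. Per-level un-nesting of $\bb$ via Lemma~\ref{LemmaNest} costs height linear in the current subformula at each of up to $O(m)$ levels, hence $O(m(m+n))$ overall. Your ``global'' un-nesting is still an iterated application of Lemma~\ref{LemmaNest} --- no single-shot linear version is available in the paper --- so it too is $\Theta(m^2)$ in the worst case where $K$ is a chain of atom connectives, and you must re-nest at the end as well. The structural reason is that your induction drags the potentially large formula $C$ through every level of $K$, so any atom level forces work proportional to the ambient size. The paper sidesteps this by pushing the \emph{constant-size} $(1\ba0)$ into $K$ via Lemma~\ref{LemmaMerge}, whose height is $O(ml)$ with $l=2$, i.e.\ $O(m)$; that single use of Lemma~\ref{LemmaMerge} is precisely the missing ingredient for the linear bound.
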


\begin{proof}
We construct the first derivation; the second can be obtained as the mirror image of the first. The derivation is shown in Figure~\ref{FigDTWEak}.

We take derivations $\phi$, $\psi$ and $\chi$ to be instances of the constructions that nest and un-nest atom occurrences from Lemma~\ref{LemmaNest}, while $\omega$ is an instance of weakening. The derivation $\pi$ is the construction from Lemma~\ref{LemmaMerge}. The derivation contains no cuts.

The width of the derivation is $O(m+n)$. The subderivations $\phi$ and $\chi$ have height $O(m+n)$, while $\psi$ has height $O(n)$ and $\pi$ has height $O(m)$ so overall the height of the derivation is $O(m+n)$.
\end{proof}

\begin{definition}
We call \emph{DT-weakenings} the derivations constructed in Lemma~\ref{LemmaDTWEak}.
\end{definition}

\begin{figure}[t]
\[
\drv{K\{A\}\ba C;
     [\phi]|;
     \drB{\drB{K\{A\}
               \land
               \drB{\drv{\drB{1;[=]-;1\land1}
                         \ba
                         \drB{0;[=]-;0\land0};
                         [\land\sdn\ba]-;
                         (1\ba0)\land(1\ba0)}};
               [=]-;
               \drB{\drv{K\{A\}\land(1\ba0);
                    [\pi]|;
                    \drB{K\left\{\,
                         \drv{A\land(1\ba0);
                              [=]-;
                              \drB{(A\land(1\ba0))
                                   \lor
                                   \drB{0;[\omega]|;(0\ba1)\land B}};
                              [\psi]|;
                              A\ba B}
                         \,\right\}}}}
               \land
               (1\ba0)}
          \lor
          ((0\ba1)\land C)};
     [\chi]|;
     K\{A\ba B\}\ba C}
\]
\caption{Derivation of DT-weakening.}
\Description{Derivation of DT-weakening.}
\label{FigDTWEak}
\end{figure}

The following definition of the `projection' of an atom on a derivation is key to our proof of cut elimination. The right projection on $\ba$ is defined on formulae by replacing every subformula of the form $A\ba B$ with $B$, with the left projection defined analogously. For example, the right projection on $\ba$ of the formula $(0\lor(0\ba1))\land(1\ba(0\ba1))$ is $(0\lor1)\land1$. The obvious extension of this definition to derivations is well defined, with any inference rules that are broken easily fixed.

\begin{definition}
For any atom $\ba$, we denote the \emph{right projection on} $\ba$ of a derivation $\phi$ in system $\DTsa$ as $\rpr\ba\phi$, which we define inductively as follows.
\begin{enumerate}
\item The base cases are $\rpr\ba0\ideq0$ and $\rpr\ba1\ideq1$.
\item If $\phi$ is a horizontal composition of two derivations $\psi$ and $\chi$ by either the connective $\ba$ or a connective $\beta\not\ideq\ba$ (which is possibly another atom), define
\[
\rpr\ba\left(\,
\drB{A;[\psi]|;B}
\ba
\drB{C;[\chi]|;D}
\,\right)
\ideq
\drB{\rpr\ba C;[\rpr\ba\chi]|;\rpr\ba D}
\quad\text{and}\quad
\rpr\ba\left(\,
\drB{A;[\psi]|;B}
\sbet
\drB{C;[\chi]|;D}
\,\right)
\ideq
\drB{\rpr\ba A;[\rpr\ba\psi]|;\rpr\ba B}
\sbet
\drB{\rpr\ba C;[\rpr\ba\chi]|;\rpr\ba D}
\;.
\]
\item If $\phi$ is the composition of derivations $\psi$ and $\chi$ by the inference rule $\ba\sup\lor$, define
\[
\rpr\ba
\drB{E;
     [\psi]|;
     (A\lor B)\ba(C\land D);
     [\ba\sup\lor]-;
     (A\ba C)\lor(B\ba D);
     [\chi]|;
     F}
\ideq
\drB{\rpr\ba E;
     [\rpr\ba\psi]|;
     \rpr\ba C\land\rpr\ba D;
     [=]-;
     (0\lor\rpr\ba C)\land(0\lor\rpr\ba D);
     [\lor\sdn\land]-;
     (0\land0)\lor(\rpr\ba C\lor\rpr\ba D);
     [=]-;
     \rpr\ba C\lor\rpr\ba D;
     [\rpr\ba\chi]|;
     \rpr\ba F}
\;;
\]
we do analogously when the composition is by inference $\ba\sdn\land$.
\item If $\phi$ is the composition of derivations $\psi$ and $\chi$ by an inference rule $\alpha\sdn\ba$ or $\alpha\sup\ba$, for any connective $\alpha$, define
\[
\rpr\ba
\drB{E;
     [\psi]|;
     (A\salp B)\ba(C\salp D);
     [\alpha\sdn\ba]-;
     (A\ba C)\salp(B\ba D);
     [\chi]|;
     F}
\ideq
\drB{\rpr\ba E;
     [\rpr\ba\psi]|;
     \rpr\ba(C\salp D);
     .;
     \rpr\ba(C\salp D);
     [\rpr\ba\chi]|;
     \rpr\ba F}
\]
and
\[
\rpr\ba
\drB{E;
     [\psi]|;
     (A\ba B)\salp(C\ba D);
     [\alpha\sup\ba]-;
     (A\salp C)\ba(B\salp D);
     [\chi]|;
     F}
\ideq
\drB{\rpr\ba E;
     [\rpr\ba\psi]|;
     \rpr\ba(B\salp D);
     .;
     \rpr\ba(B\salp D);
     [\rpr\ba\chi]|;
     \rpr\ba F}
\;.
\]
\item Otherwise, if $\phi$ is the composition of derivations $\psi$ and $\chi$ by an inference or equality $\rho$, then $\rho$ survives the projection unchanged (although it may act on different formulae) and we define
\[
\rpr\ba
\drB{E;
     [\psi]|;
     A;
     [\rho]-;
     B;
     [\chi]|;
     F}
\ideq
\drB{\rpr\ba E;
     [\rpr\ba\psi]|;
     \rpr\ba A;
     [\rho]-;
     \rpr\ba B;
     [\rpr\ba\chi]|;
     \rpr\ba F}
\;;
\]
if $\rho$ is an equality, note that if $A=B$, then $\rpr\ba A=\rpr\ba B$ for our equivalence relation $=$.
\end{enumerate}
We also analogously define the \emph{left projection on} $\ba$ of a derivation $\phi$, denoted $\lpr\ba\phi$.
\end{definition}

We could see the above definition as a term-rewriting procedure. In that case, note that there are no critical pairs, therefore the procedure is confluent. In other words, given a derivation, a projection is unique.

\newcommand{\pale}[1]{\textcolor{ACMLightBlue}{#1}}
\newcommand{\black}[1]{\textcolor{black!100}{#1}}
\begin{figure}[t]
\[
\drv{1;
     [=]-;
     \drB{\pale{\black1\ba(1\lor0)};
          [\phi]|;
          \drB{\drB{\pale{\black1\ba\pale{0}};
                    [=]-;
                    0\lor\pale{(\black1\ba0)}}
               \mathbin{\pale{\ba}}
               \pale{
               \left(1\lor\drv[border=ACMLightBlue]
                              {\drv{0;[\omega_1]|;1\bb0}}\,\right)}};
          [\pale{\lor\sdn{{\ba}}}]-;
          \pale{(\black0\ba1)\mathbin{\black\lor}
                ((\black1\ba0)\ba(1\bb0))}}
     \land
     \drB{\left(1\lor\drB{0;[\omega_2]|;1\bb0}\,\right)
          \mathbin{\pale{\ba}}
          \pale{(0\lor(1\bb1))};
          [\pale{\lor\sdn\ba}]-;
          \pale{(\black1\ba0)}
          \lor
          \drB{\pale{\black{(1\bb0)}\ba(1\bb1)};
               [\pale{\bb\sdn\ba}]-;
               \drB{\pale{\black1\ba1};
                    [=]-;
                    1}
               \bb
               \pale{(\black0\ba1\pale)}}};
     [\lor\sdn\land]-;
     \drv{\drB{\pale{(\black0\ba1)}\land\pale{(\black1\ba0)};
               [\pale{\mathsf{cut}}]-;
               \pale{\black{(0\land1)}\ba(1\land0)}}
          \lor
          (\pale{((\black1\ba0)\ba(1\bb0))}\lor(1\bb\pale{(\black0\ba1)}));
          [=]-;
          \pale{((\black1\ba0)\ba(1\bb0))}\lor(1\bb\pale{(\black0\ba1)})}}
\]
\caption{An example proof of the implication $((0\ba1)\ba(0\bb1)) \to(1\bb(0\ba1))$, \emph{i.e.}, $\overline{(0\ba1)\ba(0\bb1)}\lor (1\bb(0\ba1))$ in system $\DTsa$. The black elements are those in the \emph{left projection on} $\ba$ of the proof, while the pale elements are those discarded in taking the left projection. The derivation $\phi$ is an instance of DT-weakening, while $\omega_1$ and $\omega_2$ are weakenings. Figure~\ref{FigSmallProof} displays a fragment of this proof, with the definition of $\omega_2$ unfolded.}
\Description{An example proof of the implication $(((0\ba1)\ba(0\bb1)) \to(1\bb(0\ba1)))$, \emph{i.e.}, $(\overline{(0\ba1)\ba(0\bb1)}\lor (1\bb(0\ba1)))$ in system $\DTsa$. The black elements are those in the \emph{left projection on} $\ba$ of the proof, while the pale elements are those discarded in taking the left projection. The derivation $\phi$ is an instance of DT-weakening, while $\omega_1$ and $\omega_2$ are weakenings. Figure~\ref{FigSmallProof} displays a fragment of this proof, with the definition of $\omega_2$ unfolded.}
\label{FigBigProof}
\end{figure}

For an example of the left projection of a proof, see Figure~\ref{FigBigProof}.

\begin{remark}
There are no occurrences of atom $\ba$ in either $\lpr\ba\phi$ or $\rpr\ba\phi$, and thus no cuts on $\ba$ in either – even if there are cuts on $\ba$ in $\phi$.
\end{remark}

\begin{lemma}\label{LemmaReorder}
In system\/ $\DTsa$, for every formula $A$ and every atom\/ $\ba$, there exist derivations
\[
\drv{(\lpr\ba A)\ba(\rpr\ba A);|;A}
\quad\text{and}\quad
\drv{A;|;(\lpr\ba A)\ba(\rpr\ba A)}
\;,
\]
the first of which is cut-free. The size of each derivation is $O(n^3)$, where $n$ is the size of $A$. If $A$ contains no nested atom occurrences $\ba$, the size of each derivation is $O(n^2)$.
\end{lemma}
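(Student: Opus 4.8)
The plan is to build the first derivation by structural induction on $A$ and to obtain the second one from it by duality. The base cases $A\ideq 0$ and $A\ideq 1$ are immediate: $\lpr\ba 0\ideq\rpr\ba 0\ideq 0$ and $\lpr\ba 1\ideq\rpr\ba 1\ideq 1$, so $(\lpr\ba A)\ba(\rpr\ba A)$ is $0\ba 0$ or $1\ba 1$, and the subatomic unit equations $0\ba 0=0$ and $1\ba 1=1$ supply the required equality rule. For the inductive step, suppose $A\ideq B\circ C$. If $\circ$ is one of $\land$, $\lor$, or an atom $\bb\not\ideq\ba$, then $(\lpr\ba A)\ba(\rpr\ba A)\ideq(\lpr\ba B\circ\lpr\ba C)\ba(\rpr\ba B\circ\rpr\ba C)$, and a single subatomic inference rewrites this to $((\lpr\ba B)\ba(\rpr\ba B))\circ((\lpr\ba C)\ba(\rpr\ba C))$: namely $\ba\sup\land$ if $\circ\ideq\land$, $\lor\sdn\ba$ if $\circ\ideq\lor$, and $\ba\sup\bb$ if $\circ\ideq\bb$. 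I would then close with the induction hypotheses for $B$ and $C$, composed by $\circ$.

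The remaining case is $A\ideq B\ba C$, where $(\lpr\ba A)\ba(\rpr\ba A)\ideq(\lpr\ba B)\ba(\rpr\ba C)$. Here I would use two DT-weakenings (Lemma~\ref{LemmaDTWEak} with empty context), one in each of its two forms: the first rewrites $(\lpr\ba B)\ba(\rpr\ba C)$ to $((\lpr\ba B)\ba(\rpr\ba B))\ba(\rpr\ba C)$, re-growing the discarded left branch, and the second rewrites that to $((\lpr\ba B)\ba(\rpr\ba B))\ba((\lpr\ba C)\ba(\rpr\ba C))$, re-growing the right branch; then close with the induction hypotheses composed by $\ba$. It is convenient to add one shortcut: if the formula under consideration contains no occurrence of $\ba$ at all, then $\lpr\ba A\ideq\rpr\ba A\ideq A$, and I instead take the contraction from $A\ba A$ to $A$ given by Lemma~\ref{LemmaContr} with $\alpha\ideq\ba$.

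Cut-freeness of the first derivation is then clear by inspection: the only subatomic rules used — $\ba\sup\land$, $\lor\sdn\ba$, $\ba\sup\bb$ — are never cuts (cuts are instances of $\land\sup\bc$), the DT-weakenings and contractions are cut-free by Lemmas~\ref{LemmaDTWEak} and~\ref{LemmaContr}, and equality rules are never cuts. For the second derivation I would not argue directly: a routine induction shows $\lpr\ba\bar A=\overline{\lpr\ba A}$ and $\rpr\ba\bar A=\overline{\rpr\ba A}$, so running the construction on $\bar A$ produces a derivation from $(\overline{\lpr\ba A})\ba(\overline{\rpr\ba A})$ to $\bar A$, and dualising it gives a derivation from $A$ to $(\lpr\ba A)\ba(\rpr\ba A)$. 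This dual is identity-free but generally contains cuts — the $\lor\sdn\ba$ steps of the first derivation can be identities, whose duals are cuts — which is exactly why only the first derivation is claimed to be cut-free. For the size bounds, the $\land$, $\lor$ and $\bb$ cases each contribute a single inference on formulas of size $O(n)$ before recursing, whereas each $\ba$-node contributes two DT-weakenings of size $O(n^2)$ by Lemma~\ref{LemmaDTWEak}; over a left-nested chain of $\ba$-nodes these quadratic costs are incurred at formulas of sizes $n, n-1, \dots$, summing to $O(n^3)$. When $A$ has no nested occurrence of $\ba$, every $\ba$-node $B\ba C$ has $B$ and $C$ free of $\ba$, so $(\lpr\ba A)\ba(\rpr\ba A)\ideq B\ba C$ there and no DT-weakening is needed; the construction then uses only single inferences, the $O(\lvert A_v\rvert^2)$ shortcut contractions at the disjoint maximal $\ba$-free subformulas, and trivial derivations at $\ba$-nodes, and a short accounting ($\sum\lvert A_v\rvert^2\le(\sum\lvert A_v\rvert)^2=O(n^2)$, plus $O(n^2)$ for the $\land/\lor$ skeleton) gives the $O(n^2)$ bound.

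The step I expect to be the main obstacle is the $\ba$-node case: one has to recognise that DT-weakening (Lemma~\ref{LemmaDTWEak}) is precisely the device that reinserts the two branches discarded by the projection, to check that both insertions remain cut-free, and to control their cost at $O(n^2)$ each — which is what produces the cubic bound in general and, once the chains of $\ba$-nodes responsible for it are ruled out by the no-nested-$\ba$ hypothesis, the sharper quadratic one. The commutation of projection with negation needed for the duality argument, and the rest of the bookkeeping, are routine.
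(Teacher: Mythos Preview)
Your proposal is correct and follows essentially the same route as the paper: structural induction on $A$, a single subatomic step in the $\beta\neq\ba$ case (your $\ba\sup\land$, $\lor\sdn\ba$, $\ba\sup\bb$ are alternate names for the paper's uniform $\beta\sdn\ba$), two DT-weakenings followed by recursion in the $\ba$ case, the trivial derivation at $\ba$-nodes when there are no nested $\ba$'s, and duality for the second derivation. Your additional contraction shortcut at $\ba$-free subformulas is harmless but also not new: unfolding Lemma~\ref{LemmaContr} with $\alpha\ideq\ba$ gives exactly the same $\gamma\sdn\ba$ recursion that the paper's construction already performs there, so the two constructions coincide.
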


\begin{proof}
We construct the first derivation; the second can be obtained dually. We proceed by induction on the size of $A$. In the base case, where $A\ideq0$ or $A\ideq1$, we take the derivation
\[
\drv{0\ba0;[=]-;0}
\quad\text{or}\quad
\drv{1\ba1;[=]-;1}
\;,
\]
respectively.

If $A\ideq B\sbet C$, for $\beta\not\ideq\ba$ (but $\beta$ is possibly another atom), we construct
\[
\drv{(\lpr\ba B\sbet\lpr\ba C)\ba
     (\rpr\ba B\sbet\rpr\ba C);
     [\beta\sdn\ba]-;
     \drB{\lpr\ba B\ba\rpr\ba B;
          [\psi]|;
          B}
     \sbet
     \drB{\lpr\ba C\ba\rpr\ba C ;
          [\chi]|;
          C}}\;,
\]
where $\psi$ and $\chi$ are obtained via the inductive assumption.

If $A\ideq B\ba C$, we construct
\[
\drv{\lpr\ba B\ba\rpr\ba C;
     [\omega]|;
     \drB{\drB{\lpr\ba B\ba\rpr\ba B;
               [\psi]|;
               B}
          \ba
          \drB{\lpr\ba C\ba\rpr\ba C;
               [\chi]|;
               C}}}\;,
\]
where $\psi$ and $\chi$ are obtained via the inductive assumption and $\omega$ is made up of two instances of DT-weakening. Note that if $A$ contains no nested atom occurrences $\ba$, then $A\ideq B\ba C\ideq(\lpr\ba B)\ba(\lpr\ba C)\ideq(\lpr\ba A)\ba(\rpr\ba A)$, \emph{i.e.}, we can simply adopt the trivial derivation.

The width of the derivation is $O(n)$; the DT-weakenings used in the second case are of height $O(n)$, while the $\beta\sdn\ba$ of the first case is of constant size. Thus, in the worst case, we have DT-weakenings of height $O(n)$ stacked $O(n)$ times by the inductive step, for a total height of $O(n^2)$. Thus, the total size is $O(n^3)$. However, if $A$ contains no nested atom occurrences $\ba$, we never need the DT-weakening case and thus we have constant size rules stacked $O(n)$ times, for a total size of $O(n^2)$.
\end{proof}

The previous lemma shows the semantic equivalence between $(\lpr\ba A)\ba(\rpr\ba A)$ and $A$ because it establishes the existence of derivations that go from one formula to the other and back. (The same could be established semantically, of course.)

\begin{theorem}
In system\/ $\DTsa$, the cut rule is admissible.
\end{theorem}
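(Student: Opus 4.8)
The plan is to turn the ``project, then re-order'' idea of this section into a genuine algorithm by inducting on how many atoms a proof actually cuts with. For a proof $\phi$ let $\mathsf{at}(\phi)$ be the set of atoms $\ba$ such that $\phi$ contains an instance of the rule $\land\sup\ba$; since every cut on $\ba$ is in particular such an instance, $\mathsf{at}(\phi)=\emptyset$ forces $\phi$ to be cut-free, while a cutful proof has $\mathsf{at}(\phi)\neq\emptyset$. I would then prove, by strong induction on $|\mathsf{at}(\phi)|$, that any formula possessing a proof possesses a cut-free one, which is exactly admissibility of the cut rule.

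For the inductive step, take a proof $\phi$ of $A$ that contains a cut and pick an atom $\ba\in\mathsf{at}(\phi)$. Form the projections $\lpr\ba\phi$ and $\rpr\ba\phi$; since projection commutes with $\pr$ and $\cn$, and $\lpr\ba1=\rpr\ba1=1$, these are proofs of $\lpr\ba A$ and $\rpr\ba A$ respectively, and — by the remark following the definition of projection — neither contains any occurrence of $\ba$. Compose them by the connective $\ba$ and prepend the equality step $1=1\ba1$ to get a proof $\theta$ of $(\lpr\ba A)\ba(\rpr\ba A)$. The key bookkeeping fact is that $\mathsf{at}(\theta)\subseteq\mathsf{at}(\phi)\setminus\{\ba\}$: running through the clauses of the projection, an instance of $\land\sup\ba$ (in particular a cut on $\ba$) is deleted, an instance of $\land\sup\bb$ with $\bb\neq\ba$ survives as an instance of $\land\sup\bb$ acting on projected formulas, the clause for $\ba\sup\lor$ and $\ba\sdn\land$ produces only $\lor\sdn\land$ and equality steps, and no clause ever creates a fresh $\land\sup\bb$; the prepended equality and the composition by connective contribute no inference of that shape either. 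Hence $|\mathsf{at}(\theta)|<|\mathsf{at}(\phi)|$, and the induction hypothesis yields a cut-free proof $\theta'$ of $(\lpr\ba A)\ba(\rpr\ba A)$. Lemma~\ref{LemmaReorder} supplies a cut-free derivation from $(\lpr\ba A)\ba(\rpr\ba A)$ to $A$; stacking it below $\theta'$ gives the desired cut-free proof of $A$. It matters that the recursion is applied to $\theta$ rather than to the proof obtained by appending the re-ordering derivation at once: that derivation itself contains (non-cut) instances of $\land\sup\ba$, so it would put $\ba$ back into $\mathsf{at}$ — it must be glued on only at the very end, outside the induction.

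The step I expect to be the main obstacle is precisely this bookkeeping claim, together with the underlying fact that the projection of a derivation is again a well-formed derivation (every inference's premiss and conclusion must still line up after the local surgery in the $\ba\sup\lor/\ba\sdn\land$ and $\alpha\sup\ba/\alpha\sdn\ba$ clauses). A genuine subtlety is that projection does \emph{not} preserve cut-freeness: collapsing the immediate arguments of a non-cut $\land\sup\bb$ instance down to units can turn it into a cut on $\bb$. This does no harm here, because such a $\bb$ is already in $\mathsf{at}(\phi)$ so the measure still strictly decreases; but it forces us to use $\mathsf{at}(\cdot)$ — a rule-occurrence count — rather than a cut-occurrence count, and it means the construction cannot be summarised as ``projection deletes cuts and preserves the rest''. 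Beyond this, the argument is assembly: the projections are defined and commute with $\pr,\cn$; the equation $1\ba1=1$ is available; and Lemma~\ref{LemmaReorder} provides the cut-free bridge back to $A$. The procedure so obtained is the \emph{direct} cut elimination promised in the introduction, as opposed to the indirect one via $\Prop$ behind Proposition~\ref{PropCutFreeCompl}.
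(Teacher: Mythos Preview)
Your proposal is correct and rests on the same construction as the paper: project the proof on an atom and then glue on the cut-free reordering derivation of Lemma~\ref{LemmaReorder}. Where you differ is in the order of operations and in the termination argument. The paper appends the reordering $\chi$ immediately and then iterates on the whole resulting proof, simply asserting that ``this proof has no cuts on $\ba$; iterating this construction yields a cut-free proof'' without naming a measure. You instead apply the inductive hypothesis to the bare projected proof $\theta$ and append $\chi$ only afterwards, tracking $|\mathsf{at}(\phi)|$ (atoms carrying any $\land\sup\ba$ instance) rather than the set of cut-atoms. Both subtleties you isolate are genuine: the reordering derivation does in general contain non-cut instances of $\land\sup\ba$ (through DT-weakening and Lemma~\ref{LemmaNest}), so feeding it back into the recursion would put $\ba$ back into $\mathsf{at}$; and projection on $\ba$ can promote a non-cut $\land\sup\bb$ to a cut on $\bb$, so a naive cut-atom count need not decrease under the paper's iteration either. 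Your reorganisation makes termination transparent where the paper leaves that bookkeeping implicit.
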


\begin{proof}
Take a proof $\phi$ of a formula $A$. If there exists a cut on $\ba$, construct
\[
\drv{1;
     [=]-;
     \drB{\drv[ACMBlue]{1;
                        [\lpr\ba\phi]|;
                        \lpr\ba A}
          \ba
          \drv[ACMRed]{1;
                       [\rpr\ba\phi]|;
                       \rpr\ba A}};
     [\chi]|;
     A}\;,
\]
where $\chi$ is the cut-free derivation of Lemma~\ref{LemmaReorder}. This proof has no cuts on $\ba$. Iterating this construction yields a cut-free proof.
\end{proof}

\begin{figure}[t]
\[
\drv{1;
     [=]-;
     \drv[ACMBlue]{(0\lor1)
                   \land
                   \left(1\lor\drB{0;[\omega_2]|;1\bb0}\,\right);
                   [\lor\sdn\land]-;
                   (0\land1)\lor(1\lor(1\bb0));
                   [=]-;
                   1\lor(1\bb0)}
     \ba
     \drv[ACMRed]{\left(1\lor\drB{0;[\omega_1]|;1\bb0}\,\right)
                  \land
                  (0\lor(1\bb1));
                  [\lor\sdn\land]-;
                  (1\land0)\lor((1\bb0)\lor(1\bb1));
                  [=]-;
                  (1\bb0)\lor(1\bb1)};
     [\lor\sdn\ba]-;
     \drB{1\ba(1\bb0);
          [\psi]|;
          (1\ba0)\ba(1\bb0)}
     \lor
     \drB{(1\bb0)\ba(1\bb1);
          [\bb\sdn\ba]-;
          \drB{1\ba1;[=]-;1}\bb(0\ba1)}}
\]
\caption{An example of the cut-elimination construction applied once, for atom $\ba$, to the proof we presented in Figure~\ref{FigBigProof}. The two shaded boxes at the left and the right of $\ba$ denote the left and right projections of the proof of Figure~\ref{FigBigProof}, respectively. The derivation $\psi$ is an instance of DT-weakening. Some equality rule instances have been grouped.}
\Description{An example of the cut-elimination construction applied once, for atom $\ba$, to the proof we presented in Figure~\ref{FigBigProof}. The two shaded boxes at the left and the right of $\ba$ denote the left and right projections of the proof of Figure~\ref{FigBigProof}, respectively. The derivation $\psi$ is an instance of DT-weakening. Some equality rule instances have been grouped.}
\label{FigCutElExample}
\end{figure}

For an example of the cut-elimination procedure, see Figure~\ref{FigCutElExample}. Note that the procedure blows up the size of the proof exponentially. In particular, it grows exponentially in the number of \emph{distinct} atoms in the original proof upon which there are cuts (rather than exponentially in the number of cuts). The procedure is not confluent, due to an arbitrary ordering of the atoms upon which there are cuts. The proof is, somewhat, conceptually similar to the `experiments method' for system $\SKS$ \cite{Ralp:19:Modular-:yq}, which is, however, confluent.

The semantical nature of the cut-elimination procedure, in addition to its simplicity and the simplicity of the proof system itself, are the reasons why we consider $\DTsa$ natural.

\section{Statman Tautologies}\label{SectStatman}

We mentioned in the introduction that our proof system generates efficient proofs. Indeed, we show that a certain class of tautologies due to Statman, which cannot have better than exponential cut-free proofs in the sequent calculus, have polynomial cut-free proofs in our system. We achieve this by using projections – the same construction that we use for cut elimination. Further, the proofs are semantically natural, being a formalisation of the argument described in the introduction.

The emphasis of this section is on the method rather than the result. We already knew that Statman tautologies can be proved with no cuts and in polynomial time in deep inference \cite{BrusGugl:07:On-the-P:fk}. Indeed, a very small amount of deep inference is necessary to achieve that, meaning that allowing inference just below the root connective of formulae is all that is needed to obtain polynomial-size proofs. We also know that that small amount of deep inference is enough to polynomially simulate proofs with inferences at any depth \cite{Das:11:On-the-P:fk} and still retain cut-freeness. Intuitively, we achieve those speed-ups because deep inference captures the distributivity laws more efficiently than the sequent calculus, without involving cuts.

Cuts provide an essentially different compression mechanism from distributivity. Especially when they are reduced to their atomic form, cuts can be interpreted as a case-analysis mechanism. In fact, in an atomic cut, the two dual hypotheses on the truth value of an atom are formulated and then employed in the proof above the cut. The point made in this section is that polynomial-size proofs of the Statman tautologies can be obtained via a natural case analysis, as explained in the introduction, but without cuts.

\begin{figure}[t]
\[
\drv{1;
     [=]-;
     \drB{\drB{(1\lor0)\bb_1(0\lor1);
               [\lor\sdn\bb_1]-;
               \drB{(1\bb_1 0);
                   [=]-;
                   1\land\bar b_1}
               \lor
               \drB{(0\bb_1 1);
                   [=]-;
                   0\lor b_1}}
          \ba_1
          \left(\left(0
                      \land
                      \drB{0;
                           [\omega_1]|;
                           \bar b_1}\,
                \right)
                \lor
                \left(1
                      \lor
                      \drB{0;
                           [\omega_2]|;
                           b_1}\,
                \right)
          \right)};
     [\phi]|;
     (\bar a_1\land\bar b_1)\lor(a_1\lor b_1)}
\]
\caption{Base case for the construction of polynomial-size proofs of Statman tautologies.}
\Description{Base case for the construction of polynomial-size proofs of Statman tautologies.}
\label{FigStatmanBase}
\end{figure}

\begin{figure}[t]
\[
\begin{array}{@{}l@{}}
\drv{S_{n-1};
     [\pi]|;
     \drB{\drB{S_{n-1};
               [\rho]|;
               \drB{\drB{1;
                         [=]-;
                         \phi}
                    \bb_n
                    \drB{S_{n-1};
                         [=]-;
                         \begin{array}{@{}r@{}l@{}}
                            0\lor{}                                    &
                            (1\land\bar a_{n-1}\land1\land\bar b_{n-1}) \\
                         \noalign{\smallskip}
                            {}\lor\bigvee_{k=1}^{n-2}{}                &
                            (1\land A^{n-1}_k\land1\land B^{n-1}_k)
                            \lor(a_1\lor b_1)                           \\
                         \end{array}}};
               [\chi]|;
               \drB{\begin{array}{@{}r@{}l@{}}
                       \drB{\bar b_n;
                            [=]-;
                            1\land\bar b_n}\lor{}        &
                       \left(\,\drB{b_n;
                                      [=]-;
                                      0\lor b_n}
                                 \land\bar a_{n-1}\land
                                 \drB{b_n;
                                      [=]-;
                                      0\lor b_n}
                                 \land\bar b_{n-1}
                       \right)                            \\
                    \noalign{\medskip}
                       {}\lor\bigvee_{k=1}^{n-2}{}       &
                       \left(\,\drB{b_n;
                                    [=]-;
                                    0\lor b_n}
                               \land A^{n-1}_k\land
                               \drB{b_n;
                                    [=]-;
                                    0\lor b_n}
                               \land B^{n-1}_k
                       \right)
                       \lor(a_1\lor b_1)                  \\
                    \end{array};
                    .;
                    \lpr{\ba_n}S_n}}
          \ba_n
          \drB{S_{n-1};[\psi]|;\rpr{\ba_n}S_n}};
     [\theta]|;
     S_n}
\\\noalign{\bigskip\bigskip}
\text{where}\quad\phi\ideq1\lor
\drB{0;
     [\omega]|;
     \begin{array}{@{}r@{}l@{}}
                                                   &
        (0\land\bar a_{n-1}\land0\land\bar b_{n-1}) \\
     \noalign{\smallskip}
        {}\lor\bigvee_{k=1}^{n-2}{}                &
        (0\land A^{n-1}_k\land0\land B^{n-1}_k)
        \lor(a_1\lor b_1)                           \\
     \end{array}}
\\\noalign{\bigskip\bigskip}
\text{and}\quad\psi\ideq
\drB{(\bar a_{n-1}\land\bar b_{n-1})
     \lor\bigvee_{k=1}^{n-2}(A^{n-1}_k\land B^{n-1}_k)
     \lor(a_1\lor b_1);
     [=]-;
     \begin{array}{@{}r@{}l@{}}
        \left(0\land\drB{0;
                         |;
                         \bar b_n}\,
        \right)\lor{}                &
        \left(\left(1\lor\drB{0;
                              |;
                              b_n}\,
              \right)
              \land\bar a_{n-1}\land
              \left(1\lor\drB{0;
                              |;
                              b_n}\,
              \right)
              \land\bar b_{n-1}
        \right)                       \\
     \noalign{\medskip}
        {}\lor\bigvee_{k=1}^{n-2}{}  &
        \left(\left(1\lor\drB{0;
                              |;
                              b_n}\,
              \right)
              \land A^{n-1}_k\land
              \left(1\lor\drB{0;
                              |;
                              b_n}\,
              \right)
              \land B^{n-1}_k
        \right)
        \lor(a_1\lor b_1)
     \end{array}}                     \\
\end{array}
\]
\caption{Inductive case for the construction of polynomial-size proofs of Statman tautologies.}
\Description{Inductive case for the construction of polynomial-size proofs of Statman tautologies.}
\label{FigStatmanInduct}
\end{figure}

\begin{definition}
We call \emph{Statman tautologies} the formulae $S_1$, $S_2$, \dots:
\[
\renewcommand\arraystretch{1.5}
\begin{array}{@{}r@{}l@{}l}
S_1&{}\ideq{}&(\bar a_1\land\bar b_1)\lor(a_1\lor b_1)\;,            \\
   &\hbox to0pt{$\;\cdots\;$,\hss}                                   \\
S_n&{}\ideq{}&(\bar a_n\land\bar b_n)\lor((A^n_{n-1}\land B^n_{n-1})
                           \lor\cdots\lor (A^n_1    \land B^n_1    ))
                                     \lor(a_1\lor b_1)\;,            \\
   &\hbox to0pt{$\;\cdots\;$,\hss}                                   \\
\end{array}
\]
 where $a_i$ and $b_i$ stand for $0\ba_i 1$ and $0\bb_i 1$, and $\bar a_i$ and $\bar b_i$ stand for $1\ba_i 0$ and $1\bb_i 0$, we work modulo associativity and, for $n>k\geq1$:
\[
\renewcommand\arraystretch{1.5}
\begin{array}{@{}l@{}}
A^n_k\ideq(a_n\lor b_n)\land\cdots\land(a_{k+1}\lor b_{k+1})\land\bar a_k\;,\\
B^n_k\ideq(a_n\lor b_n)\land\cdots\land(a_{k+1}\lor b_{k+1})\land\bar b_k\;.\\
\end{array}
\]
\end{definition}

Note that, in the previous definition, $A^n_k\ideq(a_n\lor b_n)\land A^{n-1}_k$ and $B^n_k\ideq(a_n\lor b_n)\land B^{n-1}_k$, whenever $n-1>k$.

\begin{theorem}
There exist cut-free proofs of the Statman tautologies of size $O(m^{2.5})$, where $m$ is the size of the tautologies.
\end{theorem}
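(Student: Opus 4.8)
The plan is to construct, by induction on $n$, a cut-free proof $\Pi_n$ of $S_n$ that formalises the case analysis from the introduction: split on whether $a_n$ is true (if so, $S_n$ collapses to $S_{n-1}$), and otherwise on whether $b_n$ is true (if so, again $S_{n-1}$), the remaining case $a_n = b_n = 0$ making $S_n$ trivially true. The base case is the proof of $S_1$ shown in Figure~\ref{FigStatmanBase}. For the inductive step I would exhibit the cut-free derivation from $S_{n-1}$ to $S_n$ shown in Figure~\ref{FigStatmanInduct} and take $\Pi_n$ to be the composition of $\Pi_{n-1}$ with it; the crucial point is that this derivation uses its premiss $S_{n-1}$ \emph{exactly once}, so the induction does not blow up exponentially.

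The inductive derivation is assembled around nested applications of the cut-free reorder construction of Lemma~\ref{LemmaReorder}. First one checks that the four projections arising from the nested case split are, up to small cut-free derivations, what the semantics predicts. Using the recursion $A^n_k\ideq(a_n\lor b_n)\land A^{n-1}_k$: both $\rpr{\ba_n}S_n$ and $\rpr{\bb_n}\lpr{\ba_n}S_n$ are cut-free derivable from $S_{n-1}$ (setting $a_n$ or $b_n$ true makes every factor $a_n\lor b_n$ derivably true and makes $\bar a_n\land\bar b_n$ equal to $0$), while $\lpr{\bb_n}\lpr{\ba_n}S_n$ is cut-free derivable from the unit $1$ (setting both false annihilates every conjunct $A^n_k\land B^n_k$ and leaves $\bar a_n\land\bar b_n = 1$). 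The required \emph{glue} derivations use weakenings (Proposition~\ref{PropWeak}), the constructions of Lemma~\ref{LemmaConstr}, and the associativity and commutativity inference rules of subatomic shape. The step is then: cocontract $S_{n-1}$ to $S_{n-1}\ba_n S_{n-1}$ by Lemma~\ref{LemmaContr}; send the right copy to $\rpr{\ba_n}S_n$ by glue; send the left copy via Lemma~\ref{LemmaConstr} to $1\bb_n S_{n-1}$, derive its left $\bb_n$-component from $1$ by weakening, reformat its right component to $\rpr{\bb_n}\lpr{\ba_n}S_n$ by glue, and apply the reorder construction of Lemma~\ref{LemmaReorder} for $\bb_n$ to reach $\lpr{\ba_n}S_n$; finally apply the reorder construction for $\ba_n$ to $(\lpr{\ba_n}S_n)\ba_n(\rpr{\ba_n}S_n)$ to obtain $S_n$. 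All ingredients are cut-free, so $\Pi_n$ is cut-free.

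For the size bound, $|S_n| = \Theta(n^2)$, so if $m$ denotes the size of the tautology then $n = \Theta(\sqrt m)$. Every ingredient of the inductive step acts on formulae of size $O(n^2)$, and the most expensive ones --- the cocontraction of $S_{n-1}$ (Lemma~\ref{LemmaContr}) and the two reorder derivations (Lemma~\ref{LemmaReorder}) --- each have size $O(s^2)$ in the size $s$ of the formula they are built on, hence $O(n^4)$. For the reorders it is essential that $S_n$ and $\lpr{\ba_n}S_n$ contain no nested occurrences of $\ba_n$ or $\bb_n$ --- in $S_n$, atoms occur only inside the literals $0\ba_i 1$ and $1\ba_i 0$, which sit under $\land$ and $\lor$ --- so that Lemma~\ref{LemmaReorder} yields its $O(s^2)$ bound rather than the generic $O(s^3)$ one. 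The remaining ingredients fit within the same $O(n^4)$ budget. Hence each inductive step has size $O(n^4)$, and $\Pi_n$, the composition of the $n$ steps, has size $\sum_{k=2}^{n}O(k^4) = O(n^5) = O(m^{2.5})$.

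I expect the main work to be the bookkeeping around the four projections: one must verify that each equals, up to a cut-free derivation that fits the per-step budget, the formula predicted by the case analysis. The delicate point is that commutativity of $\land$ and $\lor$ is realised by inference rules, not by the equational theory $=$, so convenient simplifications such as $1\lor b_n = 1$, $0\lor X = X$ or $0\land\bar b_n = 0$ are not equalities and must be produced by small explicit derivations. A secondary concern is routing the single copy of $S_{n-1}$ through the step --- via one cocontraction on $\ba_n$ and one instance of Lemma~\ref{LemmaConstr} for the $\bb_n$ sub-case --- so that the induction stays polynomial rather than becoming a tree-like, exponential case analysis.
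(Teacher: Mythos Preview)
Your proposal is correct and follows essentially the same approach as the paper: the inductive step is exactly the derivation of Figure~\ref{FigStatmanInduct}, assembled from a cocontraction $\pi$ on $\ba_n$, the construction $\rho$ of Lemma~\ref{LemmaConstr} for the $\bb_n$ sub-split, weakenings $\omega$, glue derivations $\psi$, and two instances $\chi,\theta$ of the reorder construction of Lemma~\ref{LemmaReorder}, with the same observation that the absence of nested $\ba_n$ (resp.\ $\bb_n$) in $S_n$ (resp.\ $\lpr{\ba_n}S_n$) yields the $O(s^2)$ bound for the reorders. Your size accounting in terms of $n$ (each step $O(n^4)$, summed to $O(n^5)=O(m^{2.5})$) is the same as the paper's $O(m^2)$ per step times $O(\sqrt m)$ steps.
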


\begin{proof}
Given $S_1$, $S_2$, \dots, $S_n$, where $S_n$ is of size $O(m)=O(n^2)$, the proof is by induction on $n$; the base case is in Figure~\ref{FigStatmanBase} and the inductive case in Figure~\ref{FigStatmanInduct}.

In Figure~\ref{FigStatmanBase}, $\phi$ is the construction from Lemma~\ref{LemmaReorder}, $\omega_1$ and $\omega_2$ are weakenings, and the size of the whole derivation is (obviously) a constant.

In Figure~\ref{FigStatmanInduct}, each derivation from $S_i$ to $S_{i+1}$ is of size $O(m^2)$ and again we use projections: $\theta$ and $\chi$ are the construction from Lemma~\ref{LemmaReorder}, with size $O(m^2)$ (since there are no nested $\ba_n$'s in $S_n$); $\omega$ is a weakening with size $O(m)$; $\pi$ is a cocontraction of size $O(m^2)$; $\rho$ is the construction from Lemma~\ref{LemmaConstr}, also with size $O(m^2)$. There are $O(n)=O(m^{0.5})$ inductive steps in the proof, giving a total size of $O(m^{2.5})$.
\end{proof}

This bound on the size of Statman tautologies matches the bound for dag-like, cut-free sequent proofs presented in \cite{ClotKran:02:Boolean-:xr} (Theorem~5.3.3).

The example on Statman tautologies that we have presented here shows that we can have proofs that are short, based on natural case analysis and cut-free. No other proof system that we know of exhibits those three properties. At this point, we do not know how far our case-analysis technique can go, but we do not expect it to have the same power as case analysis by cuts. This could be the subject of future research.

\section{Strict Decision Trees}

We have seen how to translate between $\DT$ and $\Prop$ within system $\DTsa$. Here, we see how to translate between $\DT$ and $\SDT$, and thus also between $\Prop$ and $\SDT$.

\begin{lemma}\label{LemmaDTConv}
In system\/ $\DTsa$, for every formula $A$ in\/ $\DT$, there exists some semantically equivalent formula $B$ in\/ $\SDT$ and derivations
\[
\drv{A;|;B}
\quad\text{and}\quad
\drv{B;|;A}
\;.
\]
\end{lemma}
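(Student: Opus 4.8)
The plan is to iterate the Shannon-expansion construction of Lemma~\ref{LemmaReorder}, carrying out an induction on the number of distinct atoms occurring in $A$. The target formula $B$ will in effect be the full binary decision tree obtained by repeatedly branching $A$ on all its atoms, with the leaves being the (constant) truth values reached.

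For the base case I would take $A$ with no atom occurrences at all. Then $\sem{A}_X$ does not depend on the assignment $X$, so it equals some constant $v\in\{0,1\}$; set $B\ideq v$, which is trivially in $\SDT$ and semantically equivalent to $A$. If $v=1$ then $A$ is a tautology, so Proposition~\ref{PropCutFreeCompl} supplies a (cut-free) proof from $1$ to $A$, which is the required derivation from $B$ to $A$, while a coweakening (Proposition~\ref{PropWeak}) supplies the derivation from $A$ to $1=B$. If $v=0$ then $A$ implies $0$, so Proposition~\ref{PropImplCompl} supplies a derivation from $A$ to $0=B$, while a weakening supplies the derivation from $0=B$ to $A$.

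For the inductive step I would pick any atom $\ba$ occurring in $A$ and invoke Lemma~\ref{LemmaReorder} to obtain derivations from $A$ to $(\lpr\ba A)\ba(\rpr\ba A)$ and back. Since left and right projection on $\ba$ delete \emph{every} occurrence of $\ba$ and introduce no new atoms, both $\lpr\ba A$ and $\rpr\ba A$ have strictly fewer distinct atoms than $A$, so the induction hypothesis yields strict decision trees $B_L,B_R$ semantically equivalent to $\lpr\ba A,\rpr\ba A$ respectively, together with derivations in both directions. I would then set $B\ideq B_L\ba B_R$. This lies in $\SDT$ because composing by an atom two formulae that are $=$-equal to $\{\land,\lor\}$-free formulae again yields such a formula; and $B$ is semantically equivalent to $(\lpr\ba A)\ba(\rpr\ba A)$, hence to $A$, by the usual Shannon expansion (also recorded in the remark following Lemma~\ref{LemmaReorder}). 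Composing the induction-hypothesis derivations from $\lpr\ba A$ to $B_L$ and from $\rpr\ba A$ to $B_R$ by the connective $\ba$ and prepending the derivation from $A$ to $(\lpr\ba A)\ba(\rpr\ba A)$ gives the derivation from $A$ to $B$; appending the derivation from $(\lpr\ba A)\ba(\rpr\ba A)$ to $A$ to the $\ba$-composition of the reverse induction-hypothesis derivations gives the derivation from $B$ to $A$.

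The inductive step is just routine composition of derivations by connectives and by inference, exactly in the style of Lemmas~\ref{LemmaReorder} and~\ref{LemmaEquiv}; the only point carrying genuine content is the base case, where one must notice that an atom-free formula is semantically constant and then lean on the completeness results of the earlier sections. I do not expect any size bound worth stating: each use of Lemma~\ref{LemmaReorder} can roughly double the material being processed, so $B$ and the two derivations will in general be exponential in the number of distinct atoms of $A$ — which is why the lemma is stated without a complexity claim.
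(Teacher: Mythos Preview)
Your proof is correct and the inductive step is exactly the paper's: peel off one atom with Lemma~\ref{LemmaReorder}, recurse on the two projections, and compose by the atom connective. The only real difference is the base case. The paper simply observes that an atom-free formula is built from units by $\land$ and $\lor$ alone and is therefore $=$-equal to a single unit, hence already in $\SDT$; the required derivations are then single equality steps, and the paper obtains the other direction by duality. You instead invoke Propositions~\ref{PropImplCompl} and~\ref{PropCutFreeCompl} together with (co)weakenings. This is not circular, since those propositions are established earlier via $\SKSsa$, but it is heavier than needed and slightly undercuts the remark immediately following the lemma, which advertises Lemma~\ref{LemmaDTConv} as an \emph{independent} route to cut-free completeness; with your base case that route already leans on completeness.
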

\begin{proof}
Here, we construct the second derivation. The first derivation can be obtained dually.

We proceed by induction on the number of distinct atoms (not atom occurrences) appearing in $A$. If there are no atoms in $A$, the formula is made up of only conjunctions and disjunctions of units, which is equal under $=$ to a unit $0$ or $1$ and thus is in $\SDT$.

If $A$ contains at least one atom $\ba$, we construct the following derivation:
\[
\drv{\drB{\drB{C;
               [\phi]|;
               \lpr\ba A}
          \ba
          \drB{D;
               [\psi]|;
               \rpr\ba A}};
     [\chi]|;
     A}
\;,
\]
where $\chi$ is the construction from Lemma~\ref{LemmaReorder}, and $\phi$ and $\psi$ can be obtained from the inductive hypothesis since both $\lpr\ba A$ and $\rpr\ba A$ must contain strictly fewer distinct atoms than $A$ (they no longer contain any occurrences of $\ba$). Take $B\ideq C\ba D$. Because $C$ and $D$ are both in $\SDT$ by the inductive hypothesis, $B$ is also in $\SDT$. That $B$ is semantically equivalent to $A$ follows from Lemma~\ref{LemmaReorder}.
\end{proof}

\begin{remark}\label{RemOrdered}
Lemma~\ref{LemmaDTConv} can be used to obtain a $B$ that is an \emph{ordered} decision tree, \emph{i.e.}, given a total ordering $\leq$ on atoms, in every path from the root to a leaf of (the syntax tree of) $B$, each atom occurs at most once and the sequence of atom occurrences given by each such path respects $\leq$.
\end{remark}

\begin{remark}
Lemma~\ref{LemmaDTConv} yields an easy alternate proof of completeness. Given a tautology $A$, the construction of Lemma~\ref{LemmaDTConv} yields an equivalent decision tree $B$ and a derivation $\phi$ from $B$ to $A$. In fact, by Remark~\ref{RemOrdered}, we can take $B$ to be an ordered decision tree. Thus, since $B$ is also a tautology, $B$ must be equal under $=$ to the unit $1$, and so $\phi$ is a proof – in fact, a cut-free one.
\end{remark}

We can now translate between the two languages $\Prop$ and $\SDT$ within our system.

\begin{theorem}
In system\/ $\DTsa$, given a formula $A$ in\/ $\Prop$, there exists a semantically equivalent formula $B$ in\/ $\SDT$, a derivation from $A$ to $B$ and a derivation from $B$ to $A$. Given a formula $C$ in\/ $\SDT$, there exists a semantically equivalent formula $D$ in\/ $\Prop$, a derivation from $C$ to $D$ and a derivation from $D$ to $C$.
\end{theorem}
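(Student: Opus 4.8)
The plan is to obtain both directions of the translation between $\Prop$ and $\SDT$ by composing the constructions already developed, using $\DT$ as an intermediary. Since $\Prop\subseteq\DT$ and $\SDT\subseteq\DT$, Lemma~\ref{LemmaDTConv} already gives most of what is needed: for the first claim, we start from a formula $A$ in $\Prop$, regard it as a formula of $\DT$, and apply Lemma~\ref{LemmaDTConv} to obtain a semantically equivalent $B$ in $\SDT$ together with derivations from $A$ to $B$ and from $B$ to $A$ in $\DTsa$. That is exactly the statement of the first half of the theorem, so this direction is essentially immediate.

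For the second claim, I would start from a formula $C$ in $\SDT$, again viewed as a formula of $\DT$, and this time apply Lemma~\ref{LemmaEquiv} rather than Lemma~\ref{LemmaDTConv}: Lemma~\ref{LemmaEquiv} yields a semantically equivalent formula $D$ in $\Prop$ together with cut-free derivations from $C$ to $D$ and from $D$ to $C$. So this direction is also immediate, just invoking the already-established $\DT$-to-$\Prop$ translation instead of the $\DT$-to-$\SDT$ one.

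In short, the theorem is a corollary: composing the two existing ``normal-form'' lemmas (Lemma~\ref{LemmaEquiv} for $\Prop$ and Lemma~\ref{LemmaDTConv} for $\SDT$), each of which already provides the derivations in both directions and the semantic equivalence, immediately gives translations between any two of the three languages $\DT$, $\Prop$, $\SDT$. I would phrase the proof as: apply Lemma~\ref{LemmaDTConv} to $A\in\Prop\subseteq\DT$ for the first statement, and apply Lemma~\ref{LemmaEquiv} to $C\in\SDT\subseteq\DT$ for the second, noting in each case that semantic equivalence is transitive (here trivial, as only one step is involved) and that the derivations are already guaranteed by those lemmas.

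There is essentially no obstacle here — the only point requiring any care is to check that the inclusions $\Prop\subseteq\DT$ and $\SDT\subseteq\DT$ let us feed a $\Prop$- or $\SDT$-formula directly into the two lemmas, which is immediate from Definition~\ref{DefPropSDT}. If one wanted the derivations to be cut-free, one would have to check whether the relevant invocations (Lemma~\ref{LemmaEquiv} is cut-free; Lemma~\ref{LemmaDTConv} is not stated to be) suffice, but the theorem as stated only asks for the existence of derivations, so no such bookkeeping is needed.
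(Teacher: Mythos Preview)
Your proposal is correct and matches the paper's own proof, which simply states that the theorem follows immediately from Lemmas~\ref{LemmaEquiv} and~\ref{LemmaDTConv}. You have correctly identified which lemma handles which direction and made explicit the (trivial) inclusions $\Prop\subseteq\DT$ and $\SDT\subseteq\DT$ that justify feeding the given formulae into those lemmas.
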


\begin{proof}
It follow immediately from Lemmas~\ref{LemmaEquiv} and \ref{LemmaDTConv}.
\end{proof}

A \emph{reduced ordered decision tree (RODT)} is an ordered decision tree where every subformula of the form $(A\ba A)$ is replaced with $A$ \cite{Wege:00:Branchin:zr}. For a given Boolean function and a total ordering on atoms, there is a unique RODT that represents that function and respects the ordering in the sense of Remark~\ref{RemOrdered}. Instead of decision trees, RODTs are often used in practical applications.

\begin{remark}
Our proof system contains derivations that correspond to runs of some common algorithms used for decision trees. An especially common algorithm, known as `apply', takes two RODTs $A$ and $B$ with the same ordering and a binary Boolean connective $\land$ or $\lor$, and produces a new RODT with that ordering, which represents the same function as $A\land B$ or $A\lor B$ respectively.
\end{remark}

\begin{figure}[t]
\[
\drv{(((0\ba1)\bb(1\ba0))\bc(0\ba1))\land((0\ba1)\bc(1\ba0));
     [\land\sup{\bc}]-;
     \drB{((0\ba1)\bb(1\ba0))\land
          \drB{\drB{0;
                    [=]-;
                    0\bb0}
               \ba
               \drB{1;
                    [=]-;
                    1\bb1};
               [\ba\sup\bb]-;
               (0\ba1)\bb(0\ba1)};
          [\land\sup\bb]-;
          \drB{(0\ba1)\land(0\ba1);
               [\land\sup\ba]-;
               \drB{0\land0;[=]-;0}
               \ba
               \drB{1\land1;[=]-;1}}
          \bb
          \drB{(1\ba0)\land(0\ba1);
               [\land\sup\ba]-;
               (1\land0)\ba(0\land1);
               [=]-;
               0}}
     \bc
     \drB{(0\ba1)\land(1\ba0);
          [\land\sup\ba]-;
          (0\land1)\ba(1\land0);
          [=]-;
          0}}
\]
\caption{A derivation corresponding to a run of the common `apply' algorithm on two reduced ordered DTs.}
\Description{A derivation corresponding to a run of the common `apply' algorithm on two reduced ordered DTs.}
\label{FigApply}
\end{figure}

For example, in the case of applying $\land$ to two RODTs $((0\ba1)\bb(1\ba0))\bc(0\ba1)$ and $(0\ba1)\bc(1\ba0)$, which both have ordering $\bc<\bb<\ba$, the derivation corresponding to a run of the algorithm is given in Figure~\ref{FigApply}. The conclusion of the derivation is the RODT representation of the premiss, respecting the ordering $\bc<\bb<\ba$.

\section{Conclusions and Further Research}

Decision trees and closely related structures have been extended in the past with Boolean operators, typically to improve their expressiveness and efficiency, as in \cite{AndeHulg:97:Boolean-:gd}. More recently, proof theory has entered the picture in \cite{BussDasKnop:20:Proof-Co:eu}, where the authors provide sequent proof systems for decision trees and study their complexity. Against that background, the present work could be seen as a way to design a proof system for decision trees whose proof theory is guaranteed to be natural by the previous research on deep inference. Indeed, nested atoms and the corresponding rules could simply be added to the current standard system for first-order classical logic, which is presented in \cite{Brun:06:Cut-Elim:cq}, and that would effortlessly extend the expressivity of our proof system. We believe that our cut-elimination technique based on projections would extend to that case but we have not explored this yet.

The simplicity of our proof system and cut-elimination procedure is striking, and is a direct consequence of deep inference's ability to deal with non-commutative connectives \cite{Tiu:06:A-System:ai}. Also, the work \cite{Rove:18:Subatomi:zp} is very close to ours because it uses subatomic logic in deep inference to study the normalisation of a proof system with projections, which behave very similarly to decision trees. We would expect that, eventually, all these lines of research will converge towards a standard proof theory of Boolean functions, including decision trees, that supports the applications and whose complexity is well understood.

The expressiveness of our proof system can naturally be extended by quantifiers and modalities. There is now a rich literature in deep inference about standard methods to do so \cite{Gugl::Deep-Inf:uq}. As we mentioned, the cut-elimination procedure in this paper is exponential in the size of the proof to be normalised. However, in deep inference, quasi-polynomial cut-elimination procedures are available for propositional classical logic \cite{BrusGuglGundPari:09:Quasipol:kx,Jera::On-the-C:kx}. It is natural to wonder whether those results could be extended to our proof system, and possibly improved.

\begin{acks}
We would like to thank Andrea Aler Tubella, Victoria Barrett, Anupam Das and Willem Heijltjes for helpful exchanges and insights. We also thank the referees for criticism and suggestions that have greatly benefited the paper. This research has been supported by \grantsponsor{EP/K018868/1}{EPSRC}{https://gow.epsrc.ukri.org/NGBOViewGrant.aspx?GrantRef=EP/K018868/1} project \grantnum{EP/K018868/1}{EP/K018868/1}, `Efficient and Natural Proof Systems'.
\end{acks}

\bibliographystyle{ACM-Reference-Format}
\bibliography{biblio}

\end{document}